\documentclass[12pt]{article}

\usepackage[english]{babel}

\usepackage[letterpaper,top=2cm,bottom=2cm,left=3cm,right=3cm,marginparwidth=1.75cm]{geometry}
\usepackage[utf8]{inputenc} 
\usepackage[T1]{fontenc}    
\usepackage{hyperref}       
\usepackage{natbib}
\usepackage{url}            
\usepackage{booktabs}       

\usepackage{colortbl}
\usepackage[normalem]{ulem}
\usepackage{graphicx}
\graphicspath{{figures/}}
\usepackage{subcaption}
\usepackage{amsfonts}
\usepackage{amsmath}
\usepackage{mathrsfs}
\usepackage{amsthm}
\usepackage{authblk}
\usepackage{xcolor}
\usepackage{bbm}
\usepackage{dsfont}
\usepackage{setspace}
\usepackage{mathtools}
\usepackage{algorithm}
\usepackage{algpseudocode}
\usepackage{adjustbox}
\usepackage{threeparttable}
\usepackage{wrapfig}

\usepackage{amssymb}
\usepackage{xspace}
\usepackage{enumitem}   
\usepackage{tikz}
\usetikzlibrary{positioning, arrows.meta, shadows.blur}
\usepackage{tabularx}
\usepackage{makecell}
\usepackage{array}
\newlength{\wildfirepanelheight}
\newtheorem{assumption}{Assumption}

\newtheorem{theorem}{Theorem}
\newtheorem{proposition}{Proposition}

\newtheorem{remark}{Remark}
\newtheorem{lemma}{Lemma}

\newcommand*{\eg}{\emph{e.g.}{}}
\newcommand*{\ie}{\emph{i.e.}{}}

\newcommand{\indep}{\perp \!\!\! \perp}

\renewcommand{\d}{\mathrm{d}}

\title{Beyond Prediction: Conformal Inference for\\Latent Distributional Parameters}
\author{Minxing Zheng}
\author{Wenbin Zhou}
\author{Shixiang Zhu}
\affil{Carnegie Mellon University}
\date{}

\begin{document}
\maketitle

\begin{abstract}
Many prediction problems seek to infer an unobserved, instance-specific parameter that governs the distribution of an observable response, even though the latent parameter is unavailable for both historical and future instances. We develop \texttt{LatentCP}, a prior-free conformal framework that constructs uncertainty sets for latent distributional parameters using only observed context--response pairs and a specified forward model. The method first constructs a conformal prediction set in the observable response space and then retains candidate latent parameters according to the probability their induced response distributions assign to that set. This inversion provides finite-sample marginal coverage without requiring latent calibration labels, a unique inverse mapping, or knowledge of the latent mixing distribution. Because latent-set efficiency depends nonmonotonically on the response-space miscoverage rate, we further introduce a multilevel procedure that aggregates normalized incompatibility scores across several response sets and selects the aggregation distribution using an independent tuning sample. Across synthetic experiments, \texttt{LatentCP} maintains nominal latent coverage under weak forward identification, observational nonidentifiability, and latent heterogeneity and multimodality, where empirical-Bayes, likelihood-based, and proxy-label conformal methods can substantially under-cover. On a California wildfire real dataset, it produces spatially adaptive uncertainty sets for latent fire intensity. Independent tuning improves efficiency, while multilevel aggregation provides additional gains
when different response levels contain complementary information, without sacrificing validity. 

\end{abstract}

\section{Introduction}
\label{sec:introduction}

In many prediction problems, the observable outcome is not the ultimate object of interest. What we really want to understand is the hidden mechanism that makes the outcome likely. For example, a reading from an air-quality sensor informs the underlying pollutant concentration \citep{maag2018survey}; a user's rating provides evidence about user preferences \citep{koren2009matrix}; and an observed wildfire event reveals information about the underlying ignition risk in the future \citep{jain2020review}. These problems share a common structure: we observe a random response given the context, but some latent mechanism governing the distribution of that response remains hidden. We refer to this hidden quantity as a \emph{latent distributional parameter} \citep{efron2024empirical}. Quantifying the uncertainty of latent parameters is important because it characterizes the mechanism that may continue to generate future outcomes. In wildfire risk prediction, for example, a prediction of next week's fire count supports short-term preparedness, whereas uncertainty about the underlying ignition-risk regime can inform long-term strategies, such as proactive de-energization, vegetation management, resource allocation, and long-term infrastructure hardening \citep{chen2026large}.

Learning such a latent parameter is difficult for several practical reasons. First, its true value is typically unavailable not only for a new instance but also for the historical instances used to train and calibrate a model. A wildfire dataset may contain weather conditions and realized fire counts, but it does not provide ground-truth labels for the underlying risk regime at every location and time. Consequently, one cannot directly compare estimated latent parameters with their true values and calibrate their errors in the usual supervised manner. Second, an observed response provides only noisy and incomplete evidence about the mechanism that produced it. The absence of a wildfire does not necessarily imply low underlying risk: a high-risk location may simply experience no ignition during a particular period. Similarly, the same user rating can arise from different preference profiles, and the same sensor reading can be consistent with different pollutant levels. Third, locations or individuals with similar recorded characteristics may still have different latent states because of unmeasured factors. Historical data can reveal population-level patterns, but it cannot eliminate the uncertainty surrounding the latent state of a particular new instance. These information limitations make a single latent estimate potentially misleading. A more honest inferential output is therefore a set containing all latent mechanisms that remain plausible given the observable evidence.

Existing methods address parts of this problem but require information that may be unavailable or unreliable. Bayesian and empirical-Bayes approaches combine the response model with a specified or estimated population distribution for the latent parameter. These methods can be effective when that distribution is accurately known, but their uncertainty estimates may be sensitive to misspecification, especially when latent states vary substantially across instances \citep{kleijn2006misspecification}. Another natural strategy is to infer a proxy latent value from each observed response and then treat these proxies as ordinary labels \citep{stuart2010inverse}. This approach can create unwarranted certainty because a noisy response may be compatible with several latent values, and selecting one inverse solution discards the remaining possibilities before uncertainty is calibrated. Standard conformal prediction avoids strong distributional assumptions and provides finite-sample coverage guarantees, but it ordinarily requires the prediction target to be observed in the calibration data \citep{lei2021conformal}. It therefore cannot be applied directly when the target is latent. The central question is whether conformal guarantees can nevertheless be used to quantify uncertainty about a latent parameter without observing latent calibration labels or specifying how latent parameters are distributed across the population.

We answer this question through a simple change of perspective: rather than calibrating uncertainty directly in the unobserved latent space, we first calibrate uncertainty in the observable response space and then transfer it back through the forward model. Our method, termed \texttt{LatentCP}, constructs a conformal prediction set for the observable response and treats it as a calibrated description of plausible behavior. Each candidate latent parameter induces a distribution over possible responses, and we retain the candidate when that distribution assigns sufficient probability to the conformal response set. 
This construction is intuitive because the observable response is the consequence through which a latent mechanism can be evaluated. A candidate mechanism should remain plausible when the outcomes it tends to generate agree with the response values supported by the data. 

A remaining question is how to choose the miscoverage rate of the conformal response set. Different levels emphasize different ranges of plausible responses and can therefore lead to substantially different latent uncertainty sets, with no single level uniformly best. Rather than committing to one level, we develop a randomized procedure that combines the evidence provided by several response sets and learns how much weight to place on each. This broader construction preserves finite-sample validity, contains the single-level method as a special case, and can produce more informative latent sets when different levels provide complementary evidence. Like the basic \texttt{LatentCP} procedure, it requires neither observed latent calibration labels nor knowledge of how latent parameters are distributed across the population, and it retains multiple latent values when they imply observationally indistinguishable behavior. 

Our numerical results support these theoretical findings. Across synthetic experiments, \texttt{LatentCP} maintains nominal latent coverage, including under weak identification, observational nonidentifiability, and latent heterogeneity, settings in which empirical-Bayes, likelihood-based, and proxy-label conformal methods can substantially under-cover. In a real-data application to California wildfires, \texttt{LatentCP} produces spatially adaptive uncertainty sets for latent fire intensity, identifies regions with greater intensity uncertainty, and yields substantially less conservative sets than those obtained by directly propagating response uncertainty.
Independent tuning and multilevel aggregation improve efficiency while preserving validity, demonstrating the value of combining response-space evidence across multiple miscoverage levels.

\paragraph{Contributions.}
Our contributions are threefold:
\begin{enumerate}
    \item We formulate the problem of conformal inference for latent distributional parameters when only contextual variables and responses are observed, without requiring latent calibration labels or knowledge of their population distribution.
    \item We develop \texttt{LatentCP}, which constructs a conformal prediction set in the observable response space and inverts it through the forward model to obtain a finite-sample valid uncertainty set for the latent parameter.
    \item We introduce a multi-level weighting procedure that combines evidence across response-space miscoverage rates and selects data-adaptive weights to improve the efficiency of the resulting latent uncertainty sets while preserving validity.
\end{enumerate}

\paragraph{Related work.}

This study lies at the intersection of conformal prediction \citep{shafer2008tutorial,lei2018distribution,angelopoulos2021gentle} and latent-variable inference \citep{bartholomew2011latent,everett2013introduction}. Conformal prediction constructs finite-sample, distribution-free prediction sets for observable responses under exchangeability. Its scope has expanded well beyond scalar regression. For example, conformal methods have been developed for functional responses and simultaneous prediction bands \citep{lei2015conformal,diquigiovanni2022conformal}, temporally dependent responses \citep{zaffran2022adaptive,xu2024conformal}, and structured outputs arising from graphs, language models, stream networks, and operator models \citep{zargarbashi2023conformal,lekeufack2024conformal,cherian2024large,NEURIPS2025_806288e6,harris2025locally}. Particularly relevant are probabilistic and generative conformal methods, which use samples from a conditional generative model to construct flexible, potentially disconnected prediction regions in the response space \citep{pmlr-v206-wang23n,zheng2024generative}. These developments provide a rich collection of procedures that can serve as the response-space component of \texttt{LatentCP}. Their inferential target, however, remains the observable response. In contrast, \texttt{LatentCP} treats the conformal response set as an intermediate object: it evaluates the probability that each candidate forward distribution assigns to this set and uses that probability to construct an uncertainty set for the latent parameter governing the response distribution.

A related literature extends conformal prediction to settings in which the desired target is not cleanly observed in the calibration data. Under label noise, \citet{einbinder2024label} characterize conditions under which conformal sets calibrated using corrupted labels remain valid for the underlying clean labels. \citet{feldman2025conformal} address noisy or missing labels through uncertain imputation and robust reweighting, while \citet{javanmardi2023conformal} generalize conformal calibration to set-valued, partially observed labels. For intrinsically unobservable quantities, conformal meta-learners construct intervals for individual treatment effects using pseudo-outcomes and stochastic dominance conditions linking observable and oracle conformity scores \citep{alaa2023conformal}. More recently, \citet{shirakawa2025conformalized} infer unobservable variables by comparing independently trained latent-variable predictors under assumptions on their residual distributions. These approaches make an unobservable target accessible through a corruption model, a candidate-label set, a pseudo-outcome, or a discrepancy between latent predictors. \texttt{LatentCP} takes a different route: it does not impute latent labels or require a proxy error whose distribution approximates the unobserved latent error. Instead, it evaluates every candidate latent parameter through its specified forward response distribution, thereby retaining all candidates that remain compatible with conformally calibrated response evidence.

Our framework is also related to hierarchical conformal prediction, although the relevant notion of hierarchy is different. \citet{dunn2023distribution} study prediction with two-layer grouped data, where observations within a group arise from a shared distribution and are therefore not exchangeable at the individual-observation level. Their subsampling and pooling procedures restore distribution-free guarantees for future observable outcomes under this dependence structure. \citet{zhou2026hierarchical} construct probabilistic conformal prediction intervals that remain informative across multiple levels of a spatial hierarchy. By comparison, the hierarchy in \texttt{LatentCP} describes the generative relationship from context to an instance-specific latent parameter and then to an observed response. The context--response pairs remain the units of conformal calibration, and the target is the latent parameter of a new unit rather than an observable response at a new or aggregated hierarchical level. Hierarchical conformal calibration could therefore be combined with \texttt{LatentCP} when the observed responses themselves exhibit grouped or multilevel dependence, but it does not by itself solve the inverse latent-inference problem considered here.

Classical latent-variable inference estimates latent states, structural parameters, or mixing distributions using likelihood-based procedures such as the EM algorithm \citep{dempster1977maximum,balakrishnan2017statistical}, Bayesian posterior computation and data augmentation \citep{tanner1987calculation,Hoffman2014}, empirical-Bayes estimation \citep{efron1973stein,soloff2025multivariate}, variational inference \citep{blei2017variational}, and simulation-based or likelihood-free inference \citep{beaumont2002approximate,lueckmann2021benchmarking}. Closest to our setting, \citet{Yash-2024} conformalize amortized variational posteriors, using the inverse variational density as a nonconformity score to obtain marginally valid latent regions. Their construction, however, assumes a prespecified prior distribution over the latent variable, and the latent variable is observed during calibration.
All of these approaches posit a prior or mixing distribution over the latent parameter and target its posterior, either assuming correctly specified prior and mixing distribution, as in the Bayesian and variational-based constructions above, or estimating that distribution from the population, as in empirical Bayes.
\texttt{LatentCP} retains the forward family linking latent parameters to responses but neither specifies nor estimates the population mixing distribution. It instead constructs a frequentist prediction set for the random latent parameter of a new instance. Thus, the forward model supplies the structural information needed to transfer coverage, while conformal calibration accounts for the unknown population-level heterogeneity using only observed context--response pairs.

The multilevel construction in \texttt{LatentCP} is connected to inference with $e$-values. An $e$-value is a nonnegative statistic whose expectation under the relevant null hypothesis is at most one \citep{vovk2021values}. This property has motivated conformal $e$-prediction, which supports, among other applications, anytime-valid prediction, data-dependent coverage selection, and prediction under ambiguous ground truth \citep{gauthier2025values}. Related ideas appear in conformalized decision-risk assessment, where an $e$-value-based conformal radius yields inverse certificates for the reliability of candidate decisions \citep{zhou2026conformalized}. In \texttt{LatentCP}, the normalized probability that a candidate forward distribution assigns outside a conformal response set is an $e$-value-like incompatibility score: at the true latent parameter, its expectation is bounded by one. This score is not obtained by converting a conformal $p$-value into an $e$-value. Rather, its first-moment bound follows from combining response-space conformal coverage with the candidate forward law. The $e$-value perspective then explains why incompatibility evidence from multiple response-space miscoverage rates can be averaged and thresholded without sacrificing the target latent-space coverage guarantee.

Finally, this aggregation is also related to methods for merging uncertainty sets. \citet{gasparin2024merging} combine arbitrarily dependent confidence or prediction sets through voting, weighting, and randomization, producing a single set with controlled coverage without requiring access to the internal construction of the input sets. \citet{alami2026set} instead introduce a set-preserving conversion from conformal $p$-values to $e$-values, enabling efficient $e$-value merging and randomization while retaining the prediction set induced by each original conformal procedure. \texttt{LatentCP} shares the motivation that a direct intersection of valid uncertainty sets need not remain valid. Its construction, however, is not a black-box merger of the fixed-level latent sets. It aggregates the underlying forward-incompatibility scores across the conformal path before applying the latent-space inclusion threshold. This additional structure allows the method to use the magnitude of the incompatibility evidence, rather than only membership votes, and to exploit the possibility that different response-space miscoverage rates distinguish different regions of the latent parameter space.

\section{Problem Setup}
\label{sec:problem-setup}

Let $X \in \mathcal{X}$ denote the context variable, let $\theta \in \Theta$ denote the latent parameter, and let $Y \in \mathcal{Y}$ denote the response variable. We consider the following hierarchical data-generating mechanism similar to ``empirical Bayes'' in \cite{efron2024empirical}:
\begin{equation}
    X \sim P_X,
    \qquad
    \theta \mid X=x \sim \pi(\cdot \mid x),
    \qquad
    Y \mid (X=x,\theta) \sim P_\theta(\cdot \mid x).
    \label{eq:hierarchical-model}
\end{equation}
The \textit{mixing distribution} $\pi(\cdot \mid x)$ captures latent heterogeneity among instances sharing the context $x$. In contrast, the \emph{forward family}
$
    \left\{
        P_\theta(\cdot \mid x) : \theta \in \Theta
    \right\}
$
specifies, for every candidate latent parameter $\theta$, the distribution of the response under context $x$. The formulation accommodates both discrete and continuous cases.

Suppose that we observe $n$ context--response pairs,
\[
    \mathcal{D}_n
    =
    \left\{
        (X_i,Y_i)
    \right\}_{i=1}^n,
\]
but the corresponding latent parameters $\theta_1,\ldots,\theta_n$ are hidden. 
Consider a new instance with covariates $X_{n+1}$, for which both the latent parameter $\theta_{n+1}$ and the response $Y_{n+1}$ are unobserved.
Given a prescribed miscoverage rate $\alpha \in (0,1)$, our goal is to construct a context-dependent uncertainty set
$
    \mathcal{U}_\alpha(X_{n+1};\mathcal{D}_n)
    \subseteq \Theta
$
for $\theta_{n+1}$
such that
\begin{equation}
    \mathbb{P}
    \left\{
        \theta_{n+1}
        \in
        \mathcal{U}_\alpha(X_{n+1};\mathcal{D}_n)
    \right\}
    \geq 1-\alpha.
    \label{eq:latent-coverage}
\end{equation}
The probability in \eqref{eq:latent-coverage} is taken jointly over the calibration sample $\mathcal{D}_n$ and the new instance $(X_{n+1},\theta_{n+1},Y_{n+1})$.
Note that $\mathcal{U}_\alpha$ is a prediction set for the random, instance-specific parameter $\theta_{n+1}$, rather than a confidence region for a fixed population parameter. See Appendix~\ref{app:set-valued-inference} for further discussion on their distinctions.
For notation simplicity, we suppress the dependence on $\mathcal{D}_n$ and write $\mathcal{U}_\alpha(X_{n+1})$ when no ambiguity arises.

In general, constructing a valid uncertainty set for an unobserved latent parameter is theoretically impossible without additional exploitable distributional structure.
In this work, we consider the following conditions:
\begin{enumerate}
    \item[$(i)$] The mixing distribution $\pi(\cdot \mid x)$ is unknown.
    \item[$(ii)$] The forward family $\{ P_\theta(\cdot \mid x):\theta \in \Theta \}$ is known and can be evaluated from.
\end{enumerate}
Compared with standard Bayesian inference, above conditions impose strictly weaker requirements because we do not require the ``prior distribution'' to be known (Condition~($i$)). Nevertheless, they suffice to address \eqref{eq:latent-coverage} because our objective is uncertainty quantification rather than full posterior estimation, with the former requiring substantially less distributional information. We provide more in-depth comparisons in Appendix~\ref{app:bayesian}. In the next subsection, we demonstrate that these two conditions are common in many real-world settings.

\subsection{Motivating Example: Wildfire Risk Prediction}
\label{sec:ex}

Consider a wildfire risk prediction problem. 
Let $X\in\mathbb R^p$ collect observed weather, vegetation, and topographic features, let $\theta\in\mathbb R^p$ denote a latent model parameter, and let $Y\in\mathbb N_0$ be the number of wildfires occurring over a future time interval of length $\Delta$. Consider the following simple Poisson model \citep{xu2023spatio, chen2026large}:
\[
    \theta\mid X=x\sim\pi(\cdot\mid x),
    \qquad
    Y\mid(X=x,\theta)
    \sim
    \operatorname{Poisson}\left(
        \Delta\lambda_{\theta}(x)
    \right),
    \qquad
    \lambda_{\theta}(x)
    =
    \exp(\theta^\top x),
\]
where $\lambda_{\theta}(x)$ is the wildfire-occurrence intensity under context $x$ and parameter $\theta$. 
In this example, the mixing distribution $\pi(\cdot \mid x)$ is unknown, whereas the forward family $\left\{ P_\theta(\cdot \mid x) : \theta \in \Theta \right\}$ is analytically induced by the Poisson process. These properties satisfy Conditions~($i$) and~($ii$), respectively.

This model captures two layers of uncertainty: 
$(i)$ Randomness in $\theta$ induces uncertainty about the underlying occurrence intensity $\lambda_{\theta}(x)$.
$(ii)$ Even for fixed $(x,\theta)$, the realized wildfire count remains random because of the Poisson process.
A conventional prediction method constructs a prediction set for $Y_{n+1}$ using the marginal response law
$
    P(\cdot\mid x)
    =
    \int_\Theta
        P_\theta(\cdot\mid x)
        ~\pi(\mathrm{d}\theta\mid x).
$
Such a set describes which wildfire outcomes may occur, but it averages over latent risk regimes and does not reveal which underlying mechanisms remain plausible. 
This distinction matters because different regimes can produce overlapping observed outcomes while implying different probabilities of severe events \citep{zhou2026hierarchical}. 
Our objective is instead to construct a set $\mathcal{U}_\alpha(x)$ of plausible latent parameters, which induces the risk interval, $\{\lambda_{\theta}(x): \theta\in\mathcal U_\alpha(x)\}$, that can inform the deployment of mitigation measures such as de-energization, vegetation management, or infrastructure hardening.

\section{Conformal Inference for Latent Parameters}
\label{sec:lawcp}

This section develops a new framework for constructing finite-sample valid uncertainty sets for latent distributional parameters, termed \texttt{LatentCP}, as illustrated in Figure~\ref{fig:architecture}. We first use conformal prediction to construct uncertainty set in the observable response space. We then evaluate how much probability each candidate forward distribution assigns to the conformal response set and retain the candidates that are sufficiently compatible with it. In this way, response-space conformal validity can be transferred to the latent parameter space without observing latent calibration labels.

\begin{figure}[!t]
    \centering
    \includegraphics[width=0.9\linewidth]{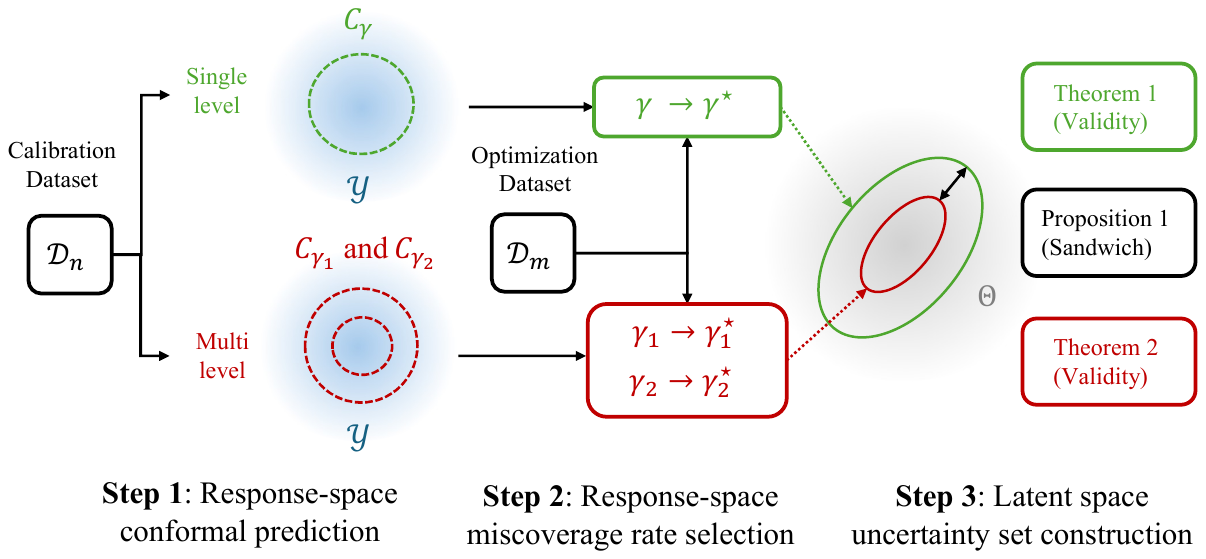}
    \caption{
    Illustration of the high-level algorithmic workflow of the proposed \texttt{LatentCP} method.
    \textbf{Step 1}: Construct the response-space conformal prediction set using the calibration dataset.
    \textbf{Step 2}: Using a separate optimization dataset, we optimize the miscoverage rate (distribution) to achieve the smallest set size in the latent space.
    \textbf{Step 3}: The latent space uncertainty set is constructed, associated with three derived theoretical guarantees.
    }
    \label{fig:architecture}
\end{figure}

\subsection{Response-Space Conformal Prediction}

Conformal prediction calibrates predictive errors on observed context--response pairs to construct a prediction set for the response at a new context, with finite-sample marginal coverage under exchangeability
\citep{vovk2005algorithmic,angelopoulos2023conformal}.
In our hierarchical setting, each observed response is generated indirectly through a sample-specific latent parameter and its associated forward distribution. The exchangeability assumption follows from a corresponding structure at the latent level: the context--latent pairs are exchangeable, and, conditional on these pairs, the responses are independently drawn from their respective forward distributions. We formalize this sampling structure below.

\begin{assumption}[Exchangeable context-latent pairs and conditional response generation]
\label{ass:conditional-latent-laws}
The pairs $\{(X_i,\theta_i)\}_{i=1}^{n+1}$ are exchangeable, with
$\theta_i\mid X_i\sim\pi(\cdot\mid X_i)$. 
Conditional on $\{(X_i,\theta_i)\}_{i=1}^{n+1}$, the responses are independent, with
\[
    Y_i\sim P_{\theta_i}(\cdot\mid X_i),
    \qquad i=1,\ldots,n+1.
\]
\end{assumption}

\begin{lemma}[Exchangeability of the observed samples]
\label{lem:observed-exchangeability}
Under Assumption~\ref{ass:conditional-latent-laws}, the observed pairs
$
    (X_1,Y_1),\ldots,(X_{n+1},Y_{n+1})
$
are still exchangeable. The proof is deferred to Appendix~\ref{app:proof-lem-obs-exch}.
\end{lemma}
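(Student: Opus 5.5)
The plan is to show that the joint law of the full triples $(X_i,\theta_i,Y_i)_{i=1}^{n+1}$ is invariant under permutations of the index set. Exchangeability of the observed pairs then follows by marginalizing out the latent coordinates.

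Fix a permutation $\sigma$ of $\{1,\ldots,n+1\}$ and measurable sets $A_1,\ldots,A_{n+1}\subseteq\mathcal{X}\times\mathcal{Y}$. Write $Z_i=(X_i,\theta_i)$ and $K(z,B)=\int \mathbbm{1}\{(x,y)\in B\}\,P_\theta(\d y\mid x)$ for $z=(x,\theta)$; this $K$ is a fixed Markov kernel that does not depend on the index $i$. Assumption~\ref{ass:conditional-latent-laws} says the $Y_i$ are conditionally independent given $Z_{1:n+1}$, with $Y_i$ depending only on $Z_i$ through $K$. By the tower property,
\begin{equation*}
\mathbb{P}\bigl\{(X_i,Y_i)\in A_i,\ i\le n+1\bigr\}
=\mathbb{E}\Bigl[\prod_{i=1}^{n+1}K(Z_i,A_i)\Bigr].
\end{equation*}
Let $g(z_1,\ldots,z_{n+1})=\prod_i K(z_i,A_i)$. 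The corresponding probability for the permuted sequence $(X_{\sigma(i)},Y_{\sigma(i)})_{i}$ equals $\mathbb{E}[g(Z_{\sigma(1)},\ldots,Z_{\sigma(n+1)})]$. This holds because, conditional on $Z_{1:n+1}$, the permuted responses are still independent with $Y_{\sigma(i)}\sim K(Z_{\sigma(i)},\cdot)$. The permuted sequence $(Z_{\sigma(i)})$ has the same law as $(Z_i)$ by exchangeability of the context--latent pairs, and $g$ is bounded and measurable. The two expectations therefore coincide.

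This establishes equality of the joint laws on product rectangles. Rectangles form a $\pi$-system generating the product $\sigma$-algebra, so Dynkin's $\pi$--$\lambda$ theorem extends the equality to all measurable sets. Hence $(X_{\sigma(i)},Y_{\sigma(i)})_i\overset{d}{=}(X_i,Y_i)_i$.

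The only step needing care is measure-theoretic. We need $(x,\theta)\mapsto P_\theta(B\mid x)$ to be jointly measurable, so that $K$ is a genuine kernel and the conditional independence statement in Assumption~\ref{ass:conditional-latent-laws} makes sense. We also need the identity that the conditional law of the responses given $Z_{1:n+1}$ is the product kernel $\bigotimes_i K(Z_i,\cdot)$. I would take both as implicit in the assumption, noting that standard Borel spaces guarantee existence of regular conditional distributions. The crucial structural point is that the same forward kernel is used for every index. Without that, permutation invariance would fail even with exchangeable $Z_i$.
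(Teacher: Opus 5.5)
Your proof is correct and follows the paper's argument. The paper also notes that, conditional on the context--latent pairs, the responses have the product law $\bigotimes_i P_{\theta_{\sigma(i)}}(\cdot\mid X_{\sigma(i)})$ built from a single common forward kernel, combines this with exchangeability of $(X_i,\theta_i)$, and marginalizes out the latent parameters. Your version makes this rigorous by computing rectangle probabilities with the tower property and extending via the $\pi$--$\lambda$ theorem, and it goes directly to the observed pairs rather than first stating exchangeability of the full triples.
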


Let $\widehat s:\mathcal X\times\mathcal Y\rightarrow\mathbb R$ be a fixed nonconformity score. For example, given a predictor $\widehat f$ trained independently of the calibration data, one may use
$\widehat s(x,y)=\|y-\widehat f(x)\|$. Fix a response-space miscoverage rate
$\gamma\in(0,\alpha)$, chosen independently of the calibration data and the new sample. For each calibration observation, define
$
    S_i=\widehat s(X_i,Y_i), i=1,\ldots,n,
$
and let $S_{(1)}\leq\cdots\leq S_{(n)}$ denote the corresponding order statistics. Setting
$
    k_\gamma=\left\lceil(n+1)(1-\gamma)\right\rceil,
$
we define $\widehat q_{1-\gamma}=S_{(k_\gamma)}$ if $k_\gamma\leq n$, with $\widehat q_{1-\gamma}=+\infty$ otherwise.
The resulting conformal prediction set for the observable response is
\[
    \mathcal C_\gamma(x;\mathcal{D}_n)
    =
    \left\{
        y\in\mathcal Y:
        \widehat s(x,y)\leq\widehat q_{1-\gamma}
    \right\}.
\]

This prediction set then satisfies the finite-sample marginal coverage guarantee
\begin{equation}
\label{eq:response-coverage}
    \mathbb P\left\{
        Y_{n+1}\in
        \mathcal C_\gamma(X_{n+1};\mathcal{D}_n)
    \right\}
    \geq 1-\gamma.
\end{equation}

\subsection{Conformal Forward Compatibility}

\begin{figure}[!t]
    \centering
    
    \includegraphics[width=1.\linewidth]{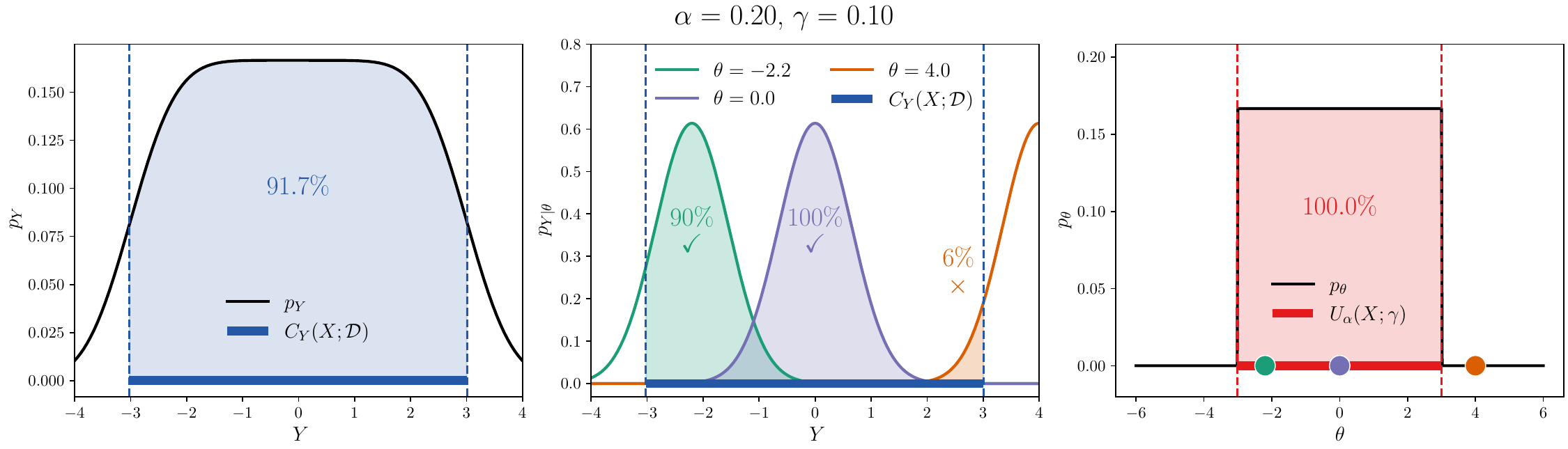}
    \caption{A stylized example of $\mathcal{U}_\alpha(x;\gamma)$ constructed according to \eqref{eq:lawcp-set} with $\alpha = 0.2$ and $\gamma = 0.10$.
    The synthetic data is generated as $\theta \sim \text{Unif}(-3, 3)$ and $Y\mid \theta \sim \mathcal{N}(\theta, 1)$ and let $x$ be an arbitrary constant.
    {\bf Left}: The marginal density $p_Y$ and the uncertainty set $\mathcal{C}_\gamma(x; \mathcal{D})$ constructed from split conformal prediction;
    {\bf Middle}: The likelihood densities $p_{Y|\theta}$ under three different configurations of $\theta$, which are marked ``$\checkmark$'' if their coverage probability with $\mathcal{C}_\gamma(x; \mathcal{D})$ is no lower than $1-\frac{\gamma}{\alpha}$, else ``$\times$''.
    {\bf Right}: The prior density $p_\theta$ and $\mathcal{U}_\alpha(x;\gamma)$, which is empirically valid (\ie, coverage probability no lower than $1-\alpha$). 
    }
    \label{fig:toy-example}
\end{figure}

Our core idea is to translate conformal validity in the response space into evidence about the latent parameter through the forward family. For a candidate $\theta\in\Theta$ and context $x$, define its \emph{forward $\gamma$-compatibility} as
\begin{equation}
\label{eq:forward-compatibility}
    p_\gamma(\theta, x)
    \coloneqq
    \mathbb P_{Y\sim P_\theta(\cdot\mid x)}
    \left\{
        Y\in\mathcal C_\gamma(x;\mathcal{D}_n)
    \right\}.
\end{equation}
Thus, $p_\gamma(\theta, x)$ is the probability mass that the candidate forward distribution $P_\theta(\cdot\mid x)$ assigns to the
$(1-\gamma)$-level conformal response set. A larger value indicates greater ``compatibility'' between the response distribution induced by $\theta$ and the predictive evidence captured by $\mathcal C_\gamma(x;\mathcal{D}_n)$.

We then define the latent uncertainty set by retaining candidates with sufficiently high forward $\gamma$-compatibility:
\begin{equation}
\label{eq:lawcp-set}
    \mathcal{U}_\alpha(x;\gamma)
    \coloneqq
    \left\{
        \theta\in\Theta:
        p_\gamma(\theta, x)
        \geq
        1-\frac{\gamma}{\alpha}
    \right\}.
\end{equation}
Equivalently, $\mathcal{U}_\alpha(x;\gamma)$ retains every latent parameter whose forward distribution assigns at most $\gamma/\alpha$ probability to responses outside the conformal set.
The threshold $\gamma/\alpha$ is chosen to transfer the response-space error bound $\gamma$ into the target latent miscoverage rate $\alpha$. 
Figure~\ref{fig:toy-example} presents a stylized example that illustrates the proposed procedure.
The following theorem formalizes this transfer.

\begin{theorem}[Finite-sample validity of \texttt{LatentCP}]
\label{thm:lawcp-validity}
Under Assumption~\ref{ass:conditional-latent-laws},
\[
    \mathbb P\left\{
        \theta_{n+1}\in
        \mathcal{U}_\alpha(X_{n+1};\gamma)
    \right\}
    \geq 1-\alpha.
\]
The proof is deferred to Appendix~\ref{app:proof-main-theorem}.
\end{theorem}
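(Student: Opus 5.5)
The plan is to combine response-space conformal coverage with a Markov-type inequality applied to the forward incompatibility at the true latent parameter. Define the incompatibility random variable
\[
    E \coloneqq 1 - p_\gamma(\theta_{n+1}, X_{n+1})
    = \mathbb P_{Y\sim P_{\theta_{n+1}}(\cdot\mid X_{n+1})}\bigl\{Y\notin \mathcal C_\gamma(X_{n+1};\mathcal D_n)\bigr\},
\]
which is a function of $(\mathcal D_n, X_{n+1}, \theta_{n+1})$. By \eqref{eq:lawcp-set}, the event $\theta_{n+1}\notin\mathcal U_\alpha(X_{n+1};\gamma)$ is exactly $\{E > \gamma/\alpha\}$. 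Markov's inequality then gives
\[
    \mathbb P\{\theta_{n+1}\notin \mathcal U_\alpha(X_{n+1};\gamma)\}\le \frac{\alpha}{\gamma}\,\mathbb E[E],
\]
so it suffices to show $\mathbb E[E]\le\gamma$.

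The key step identifies $E$ as a conditional miscoverage probability. Let $\mathcal G$ be the $\sigma$-field generated by $\mathcal D_n$, $X_{n+1}$ and $\theta_{n+1}$. Under Assumption~\ref{ass:conditional-latent-laws}, conditional on all context--latent pairs, $Y_{n+1}$ is independent of $Y_1,\ldots,Y_n$ and has law $P_{\theta_{n+1}}(\cdot\mid X_{n+1})$. The latents $\theta_1,\ldots,\theta_n$ are additional conditioning variables that do not affect this conditional law. Hence, conditional on $\mathcal G$, $Y_{n+1}\sim P_{\theta_{n+1}}(\cdot\mid X_{n+1})$, while the set $\mathcal C_\gamma(X_{n+1};\mathcal D_n)$ is $\mathcal G$-measurable, since $\widehat s$ is fixed and $\gamma$ is chosen independently of the data. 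This yields
\[
    E = \mathbb P\{Y_{n+1}\notin\mathcal C_\gamma(X_{n+1};\mathcal D_n)\mid \mathcal G\}\quad\text{a.s.}
\]

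Taking expectations with the tower property gives $\mathbb E[E]=\mathbb P\{Y_{n+1}\notin\mathcal C_\gamma\}$. This is at most $\gamma$ by the split conformal guarantee \eqref{eq:response-coverage}, which holds because the observed pairs are exchangeable by Lemma~\ref{lem:observed-exchangeability}. Combining with the Markov step, the miscoverage is at most $(\alpha/\gamma)\cdot\gamma=\alpha$.

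I expect the only delicate point to be the conditional-independence step. One must verify that conditioning on $\mathcal D_n$, which contains $Y_1,\ldots,Y_n$, and on $\theta_{n+1}$, without conditioning on the other latents, leaves $Y_{n+1}$ with law $P_{\theta_{n+1}}(\cdot\mid X_{n+1})$. I would handle this by first conditioning on the full collection $\{(X_i,\theta_i)\}_{i=1}^{n+1}$ together with $Y_1,\ldots,Y_n$. The conditional law of $Y_{n+1}$ given this larger $\sigma$-field depends only on $(X_{n+1},\theta_{n+1})$, which is $\mathcal G$-measurable. Integrating out $\theta_1,\ldots,\theta_n$ then gives the same law given $\mathcal G$. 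Measurability of $p_\gamma$ in its arguments, needed so that $E$ is a random variable, I would assume as a standing regularity condition.
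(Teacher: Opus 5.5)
Your proposal is correct and follows the paper's proof essentially line for line. The paper defines the same incompatibility $R_\gamma = 1-p_\gamma(\theta_{n+1},X_{n+1})$, identifies $\mathbb E[R_\gamma]$ with the response miscoverage probability, bounds it by $\gamma$ via Assumption~\ref{ass:conditional-latent-laws} and \eqref{eq:response-coverage}, and concludes with Markov's inequality. Your explicit conditioning argument (first on all context--latent pairs together with $Y_1,\ldots,Y_n$, then integrating out $\theta_1,\ldots,\theta_n$) fills in carefully the step the paper states in a single line.
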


Theorem~\ref{thm:lawcp-validity} converts a conformal guarantee for the observable response into finite-sample coverage for the unobserved latent parameter, without requiring latent calibration labels or knowledge of the mixing distribution $\pi(\cdot\mid x)$. Although this guarantee holds for every fixed $\gamma\in(0,\alpha)$, the choice of $\gamma$ can substantially affect the efficiency (\ie, size) of the resulting latent uncertainty set. Specifically, decreasing $\gamma$ enlarges $\mathcal C_\gamma$, which tends to increase the forward compatibility $p_\gamma(\theta, x)$, but also raises the inclusion threshold $1-\gamma/\alpha$, requiring a candidate parameter to achieve greater compatibility to be retained. Increasing $\gamma$ reverses both effects. Because these effects work in opposite directions, the latent uncertainty set need not expand or contract monotonically with $\gamma$, and neither the smallest nor the largest admissible level is necessarily optimal.
Their relationship is further illustrated by the first row of Figure~\ref{fig:tradeoff}. Across the four synthetic settings, the average set size curve demonstrates a near-convex shape as a function of $\gamma$. An optimal $\gamma^\star$ clearly exists, and the set size improvement is nontrivial.

\begin{figure}[!t]
    \centering

    \includegraphics[width=0.80\linewidth]
        {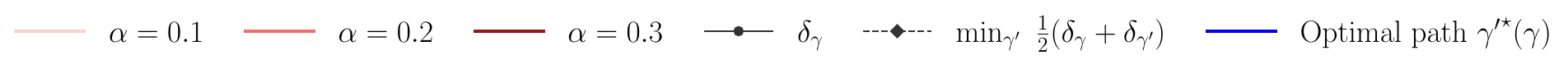}

    \vspace{0.5em}

    \begin{subfigure}[t]{0.24\linewidth}
        \centering
        \includegraphics[width=\linewidth]
            {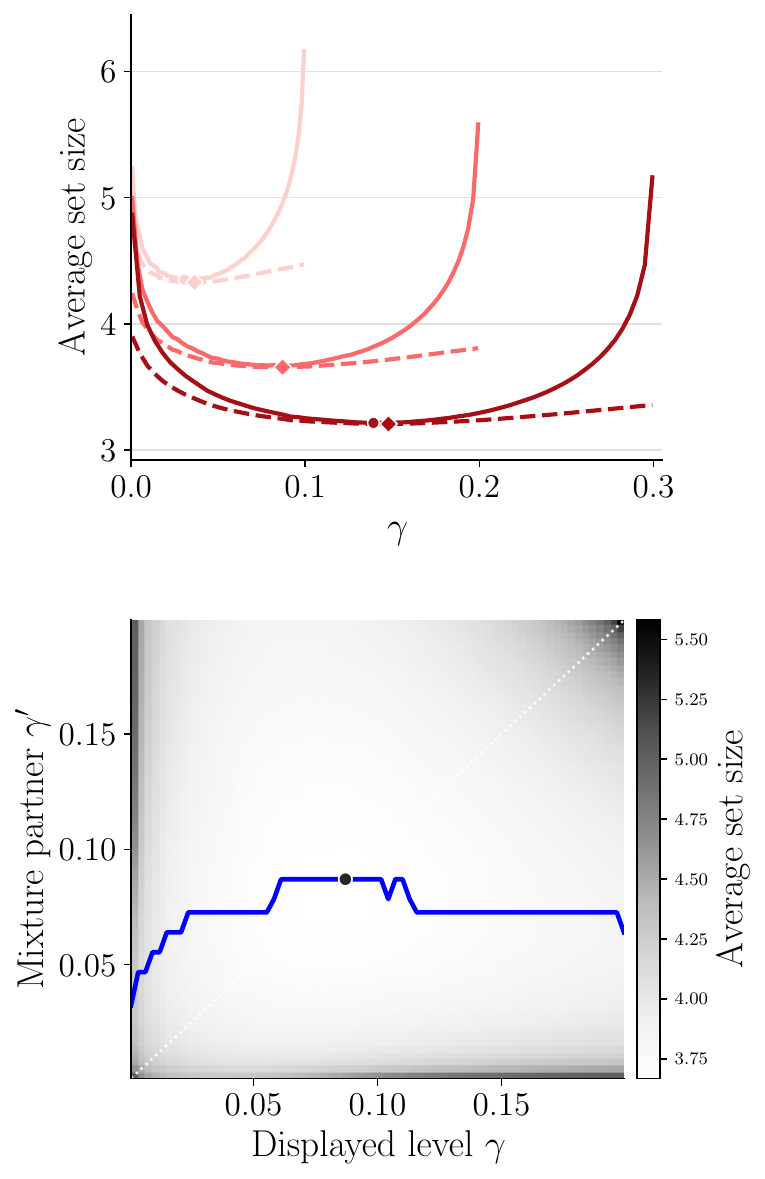}
        \caption{
            Gaussian. 
        }
        \label{fig:tradeoff-gaussian}
    \end{subfigure}
    \hfill
    \begin{subfigure}[t]{0.24\linewidth}
        \centering
        \includegraphics[width=\linewidth]
            {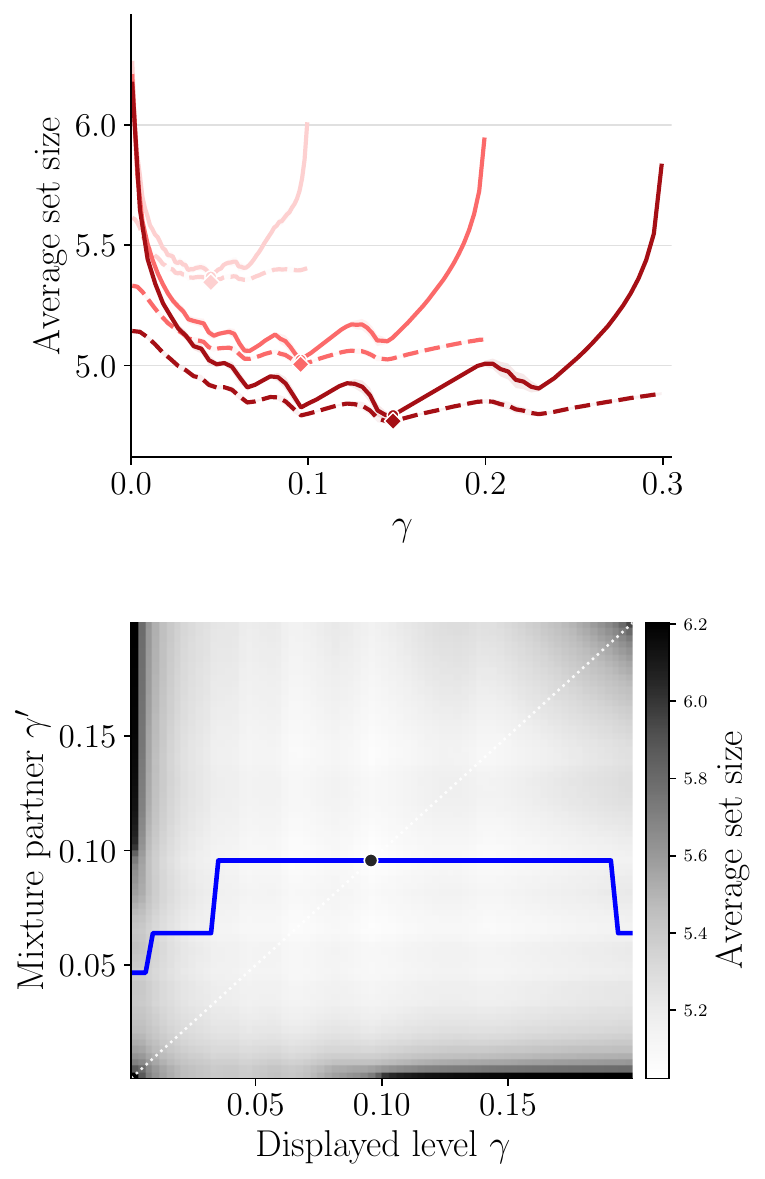}
        \caption{
            Poisson.
        }
        \label{fig:tradeoff-poisson}
    \end{subfigure}
    \hfill
    \begin{subfigure}[t]{0.24\linewidth}
        \centering
        \includegraphics[width=\linewidth]
            {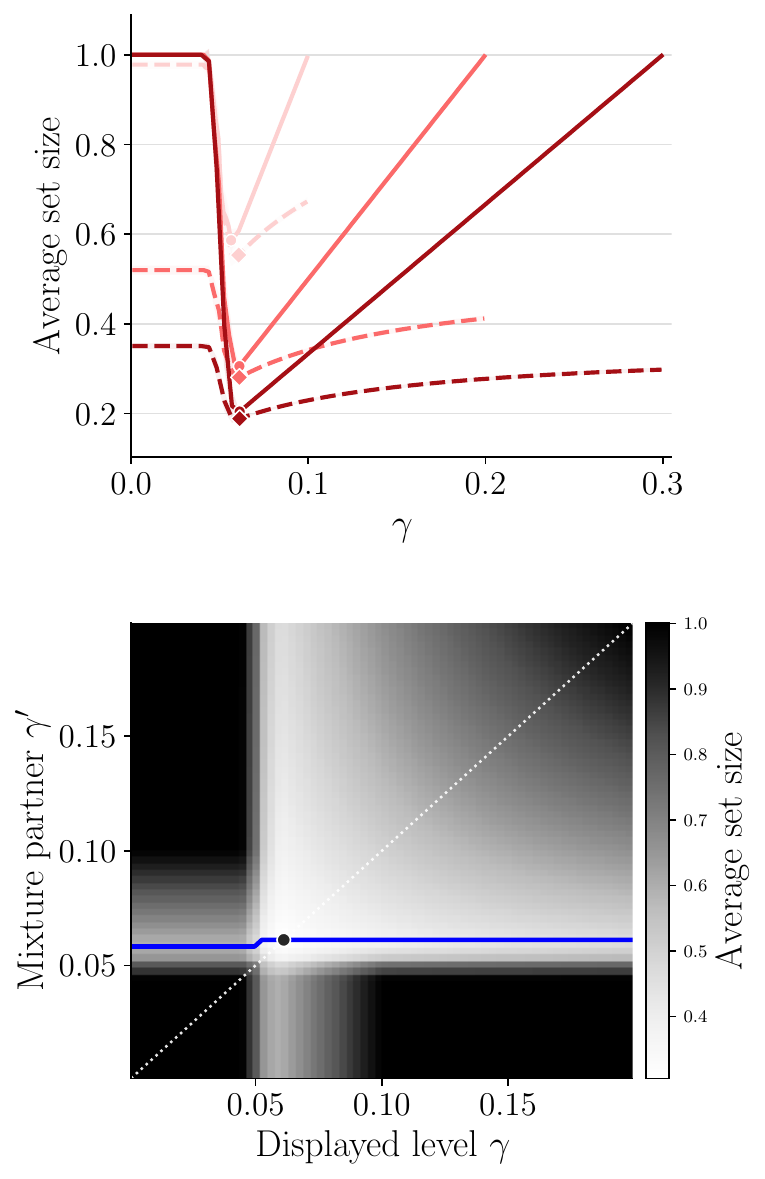}
        \caption{
            Bernoulli.
        }
        \label{fig:tradeoff-bernoulli}
    \end{subfigure}
    \hfill
    \begin{subfigure}[t]{0.24\linewidth}
        \centering
        \includegraphics[width=\linewidth]
            {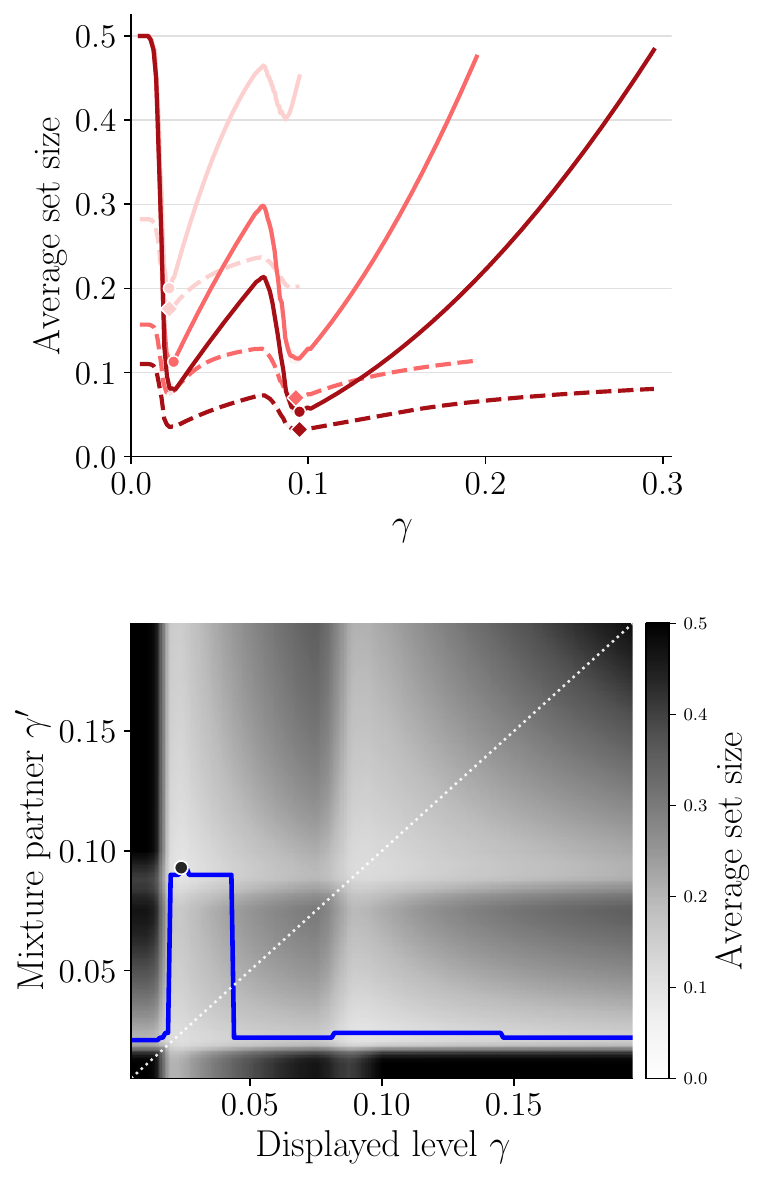}
        \caption{
            Categorical.
        }
        \label{fig:tradeoff-simplex}
    \end{subfigure}
    \caption{Effect of the auxiliary miscoverage rate $\gamma'$ on the size of the \texttt{LatentCP} uncertainty set under four synthetic data-generating mechanisms.
    \textbf{Top}: comparison of the original point-mass construction $\delta_\gamma$ with the two-point $e$-value mixture $\nu_{\gamma,\gamma'}=\frac{1}{2}\delta_\gamma+\frac{1}{2}\delta_{\gamma'}$. The mixture-set size is minimized over $\gamma'$ for each $\gamma$. Curves report the average uncertainty-set size over $60$ Monte Carlo repetitions with $3,000$ calibration samples for $\alpha\in\{0.1,0.2,0.3\}$, and solid dots mark the minimizing $\gamma^\star$.
    \textbf{Bottom}: Visualization of the corresponding two-dimensional mixture-set-size surfaces over $(\gamma,\gamma')$ for $\alpha=0.2$, with the curves indicating the minimizing $\gamma'^\star(\gamma)$.
    Each column corresponds to a distinct data-generating distribution, whose configuration details are deferred to Appendix~\ref{app:tradeoff-config}. 
    }
    \label{fig:tradeoff}
\end{figure}

\section{Multilevel-Miscoverage Uncertainty Set}
\label{sec:multilevel}

We next examine how to select the response-space miscoverage rate $\gamma$---and whether multiple miscoverage rates can be leveraged---to improve the efficiency (\ie, size) of the latent uncertainty set while preserving its coverage guarantee. Because different response miscoverage rates may extract complementary information about $\theta$, a natural approach is to combine the uncertainty sets obtained along the conformal path. A direct intersection, however, is not generally valid, since the marginal validity of the individual sets does not imply validity of their intersection. 

We therefore develop a randomized construction that aggregates normalized forward-incompatibility scores across response-space miscoverage rates before applying the latent-space inclusion threshold. The resulting procedure preserves finite-sample validity, contains every fixed-level method as a special case, and can exploit complementary evidence across levels. Its advantage is not uniform, however: aggregation may shrink the set when different levels distinguish different regions of the latent space, but may enlarge it by diluting a strong level-specific signal. We accordingly formulate the choice of miscoverage-level sampling as an optimization problem that minimizes the expected measure of the resulting latent uncertainty set.

\subsection{Randomized Miscoverage Rate Selection}
\label{subsec:multilevel-aggregation}

Recall that $p_\gamma(\theta, x)$ denotes the probability mass that the forward distribution indexed by $\theta$ assigns to the conformal response set $\mathcal{C}_\gamma(x;\mathcal D_n)$. We define the corresponding forward-incompatibility probability as
$
1-p_\gamma(\theta, x).
$
When evaluated at the true latent parameter of the future instance, this quantity satisfies
\[
\mathbb E\left[1-p_\gamma(\theta_{n+1}, X_{n+1})\right] = 
\mathbb P\left\{Y_{n+1}\notin \mathcal{C}_\gamma(X_{n+1};\mathcal D_n)\right\}
\leq \gamma.
\]
where the equality follows from the law of iterated expectation over the calibration sample and the future context--latent--response triple. The final inequality follows from the marginal conformal coverage guarantee.
This observation motivates the normalized $\gamma$-incompatibility score
\[
e_\gamma(\theta,x)\coloneqq\frac{1-p_\gamma(\theta, x)}{\gamma}.
\]
At the true latent parameter,
$
\mathbb E\left[e_\gamma(\theta_{n+1},X_{n+1})\right]\leq 1.
$
Thus, $e_\gamma$ satisfies the first-moment property of an $e$-value-type statistic for assessing the compatibility of a candidate latent parameter with the response-space prediction set \citep{vovk2021values}.

Let $\Gamma\subset(0,\alpha)$ be a prespecified collection of admissible response-space miscoverage rates, and let $\mathcal P(\Gamma)$ denote the set of probability distributions supported on $\Gamma$. For any $\nu\in\mathcal P(\Gamma)$, define the \emph{randomized incompatibility score}
\[
e_\nu(\theta,x)\coloneqq\int_\Gamma e_\gamma(\theta,x)~\mathrm d\nu(\gamma)=\int_\Gamma\frac{1-p_\gamma(\theta, x)}{\gamma}~\mathrm d\nu(\gamma).
\]

We then construct the $e$-value-based latent uncertainty set as follows:
\begin{equation}
\label{eq:multi-lawcp-set}
\mathcal{U}_\alpha(x;\nu)\coloneqq\left\{\theta\in\Theta:e_\nu(\theta,x)\leq\frac{1}{\alpha}\right\}.
\end{equation}
Unlike the procedure in \eqref{eq:lawcp-set} that selects a single level $\gamma$ and returns $\mathcal{U}_\alpha(x;\gamma)$, the construction in \eqref{eq:multi-lawcp-set} produces a set that combines the scores $e_\gamma(\theta,x)$ across all levels in the support of $\nu$ before applying the inclusion threshold $1/\alpha$.

\begin{theorem}[Validity of the randomized construction]
\label{thm:multilevel-validity}
Suppose that the conditions of Theorem~\ref{thm:lawcp-validity} hold. Let $\nu\in\mathcal P(\Gamma)$ be fixed independently of the calibration sample and the future instance. Then
\[
\mathbb P\left\{\theta_{n+1}\in\mathcal{U}_\alpha(X_{n+1};\nu)\right\}\geq 1-\alpha.
\]
The proof is deferred to Appendix~\ref{app:proof-multilevel-validity}.
\end{theorem}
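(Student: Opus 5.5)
The plan is to show that the randomized incompatibility score, evaluated at the true latent parameter, is an $e$-value. Markov's inequality then finishes the argument. The miscoverage event is $\{\theta_{n+1}\notin\mathcal{U}_\alpha(X_{n+1};\nu)\}=\{e_\nu(\theta_{n+1},X_{n+1})>1/\alpha\}$. Since $e_\nu\ge 0$ (each $p_\gamma\in[0,1]$), Markov's inequality gives
\[
\mathbb P\left\{e_\nu(\theta_{n+1},X_{n+1})>\tfrac{1}{\alpha}\right\}\le \alpha\,\mathbb E\left[e_\nu(\theta_{n+1},X_{n+1})\right].
\]
It therefore suffices to prove $\mathbb E[e_\nu(\theta_{n+1},X_{n+1})]\le 1$.

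For each fixed $\gamma\in\Gamma$, I will first re-derive the single-level bound $\mathbb E[1-p_\gamma(\theta_{n+1},X_{n+1})]\le\gamma$. Condition on $\mathcal D_n$ and $(X_{n+1},\theta_{n+1})$. By Assumption~\ref{ass:conditional-latent-laws}, $Y_{n+1}$ is conditionally independent of $\mathcal D_n$ given $(X_{n+1},\theta_{n+1})$, with law $P_{\theta_{n+1}}(\cdot\mid X_{n+1})$. Hence $1-p_\gamma(\theta_{n+1},X_{n+1})=\mathbb P\{Y_{n+1}\notin\mathcal C_\gamma(X_{n+1};\mathcal D_n)\mid \mathcal D_n,X_{n+1},\theta_{n+1}\}$. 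The only care needed is that the calibration responses $Y_1,\dots,Y_n$ depend on $\theta_1,\dots,\theta_n$ and not on $\theta_{n+1}$ beyond the pairs; conditioning on all context--latent pairs and using conditional independence of the responses handles this. Taking expectations and invoking the response coverage \eqref{eq:response-coverage}, which holds by Lemma~\ref{lem:observed-exchangeability}, gives the bound. Dividing by $\gamma$ yields $\mathbb E[e_\gamma(\theta_{n+1},X_{n+1})]\le 1$ for every $\gamma\in\Gamma$.

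Then I will interchange expectation and the $\nu$-integral by Tonelli's theorem. The integrand $e_\gamma(\theta,x)$ is nonnegative, and $\nu$ is deterministic relative to the data because it is fixed independently of the calibration sample and the future instance. This gives
\[
\mathbb E\left[e_\nu(\theta_{n+1},X_{n+1})\right]=\int_\Gamma \mathbb E\left[e_\gamma(\theta_{n+1},X_{n+1})\right]\mathrm d\nu(\gamma)\le \int_\Gamma 1\,\mathrm d\nu(\gamma)=1.
\]
The main technical point is justifying Tonelli's theorem. This requires joint measurability of $(\gamma,\omega)\mapsto e_\gamma$. Measurability holds because $\widehat q_{1-\gamma}$ is a step function of $\gamma$ through $k_\gamma$, so $p_\gamma$ is piecewise constant in $\gamma$ and measurable in the data. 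The same argument covers finitely supported $\nu$ trivially. Combining the displayed bound with Markov's inequality gives $\mathbb P\{\theta_{n+1}\notin\mathcal U_\alpha(X_{n+1};\nu)\}\le\alpha$, which is the claim.
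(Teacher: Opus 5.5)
Your proposal is correct and follows the paper's proof essentially step for step: the fixed-level moment bound $\mathbb E[e_\gamma(\theta_{n+1},X_{n+1})]\le 1$, Tonelli's theorem to interchange the expectation with the $\nu$-integral, and Markov's inequality at threshold $1/\alpha$. You also supply details the paper leaves implicit: you re-derive the fixed-level bound by conditioning on $(\mathcal D_n,X_{n+1},\theta_{n+1})$, and you check joint measurability in $(\gamma,\omega)$; for the latter, note that $e_\gamma=(1-p_\gamma)/\gamma$ is piecewise continuous rather than piecewise constant in $\gamma$, which still suffices.
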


\begin{remark}[Recovery of the fixed-level construction]
The original fixed-level uncertainty set is recovered by taking $\nu=\delta_\gamma$, where $\delta_\gamma$ denotes the point mass at $\gamma$. In this case,
$
e_{\delta_\gamma}(\theta,x)=e_\gamma(\theta,x),
$
and consequently,
\[
\mathcal{U}_\alpha(x;\delta_\gamma)=\mathcal{U}_\alpha(x;\gamma).
\]
Thus, the randomized construction contains every fixed-level procedure as a special case and introduces no additional coverage penalty.
\end{remark}

\subsection{Optimal Weighting for Set-Size Minimization}

To understand the effect of the randomization in \eqref{eq:multi-lawcp-set}, consider a finite collection
$
\Gamma=\{\gamma_1,\ldots,\gamma_K\}
$
and a discrete distribution
$
\nu=\sum_{k=1}^K w_k\delta_{\gamma_k},~ w_k>0,~ \sum_{k=1}^K w_k=1.
$
The randomized incompatibility score is then
$
e_\nu(\theta,x)=\sum_{k=1}^K w_k e_{\gamma_k}(\theta,x).
$
The corresponding aggregated set lies between the intersection and the union of the fixed-level sets. 

\begin{proposition}[Sandwich property]
\label{prop:aggregation-sandwich}
For every $x$ and every realization of $\mathcal D_n$,
\[
\bigcap_{k=1}^K\mathcal{U}_\alpha(x;\gamma_k)\subseteq\mathcal{U}_\alpha(x;\nu)\subseteq\bigcup_{k=1}^K\mathcal{U}_\alpha(x;\gamma_k).
\]
The proof is deferred to Appendix~\ref{app:proof-agg-sandwich}. 
\end{proposition}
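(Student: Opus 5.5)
The plan is to rewrite each fixed-level set in terms of the normalized incompatibility score. By \eqref{eq:lawcp-set}, $\theta\in\mathcal{U}_\alpha(x;\gamma)$ if and only if $1-p_\gamma(\theta,x)\le\gamma/\alpha$. Dividing by $\gamma>0$, this is equivalent to $e_\gamma(\theta,x)\le 1/\alpha$. So, for every $x$ and every realization of $\mathcal D_n$,
\[
\mathcal{U}_\alpha(x;\gamma_k)=\{\theta\in\Theta: e_{\gamma_k}(\theta,x)\le 1/\alpha\},
\]
and $\mathcal{U}_\alpha(x;\nu)=\{\theta\in\Theta:\sum_{k} w_k e_{\gamma_k}(\theta,x)\le 1/\alpha\}$. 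Both inclusions then reduce to pointwise facts about a convex combination of the $K$ real numbers $a_k\coloneqq e_{\gamma_k}(\theta,x)$ at a fixed $\theta$. These numbers are finite, since $p_\gamma\in[0,1]$.

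For the left inclusion, take $\theta$ in the intersection. Then $a_k\le 1/\alpha$ for all $k$. Since $w_k>0$ and $\sum_k w_k=1$, we get $\sum_k w_k a_k\le \sum_k w_k/\alpha=1/\alpha$, so $\theta\in\mathcal{U}_\alpha(x;\nu)$.

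For the right inclusion, I will argue by contraposition. Suppose $\theta$ is not in the union. Then $a_k>1/\alpha$ for every $k$, and hence $\sum_k w_k a_k>1/\alpha$. Equivalently, a convex combination is at least $\min_k a_k$, so $e_\nu(\theta,x)\le 1/\alpha$ forces $\min_k a_k\le 1/\alpha$. Therefore $\theta\notin\mathcal{U}_\alpha(x;\nu)$.

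There is no real obstacle. The only points needing care are that the equivalence between the threshold forms uses $\gamma_k>0$, which holds because $\Gamma\subset(0,\alpha)$, and that the argument is deterministic. It never uses the randomness of $\mathcal D_n$, so it holds for every realization, as the statement claims.
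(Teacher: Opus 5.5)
Your proof is correct and follows the paper's argument: both rewrite the fixed-level rule as $e_{\gamma_k}(\theta,x)\le 1/\alpha$ (using $\gamma_k>0$), average these bounds with weights summing to one to get the left inclusion, and get the right inclusion by contraposition, since $e_{\gamma_k}(\theta,x)>1/\alpha$ for every $k$ forces $e_\nu(\theta,x)>1/\alpha$. Your observation that a convex combination is at least $\min_k e_{\gamma_k}(\theta,x)$ also shows the right inclusion holds with merely nonnegative weights, which is slightly more general than the paper's appeal to strictly positive weights.
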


\begin{figure}[!t]
    \centering
    \includegraphics[width=.4\linewidth]{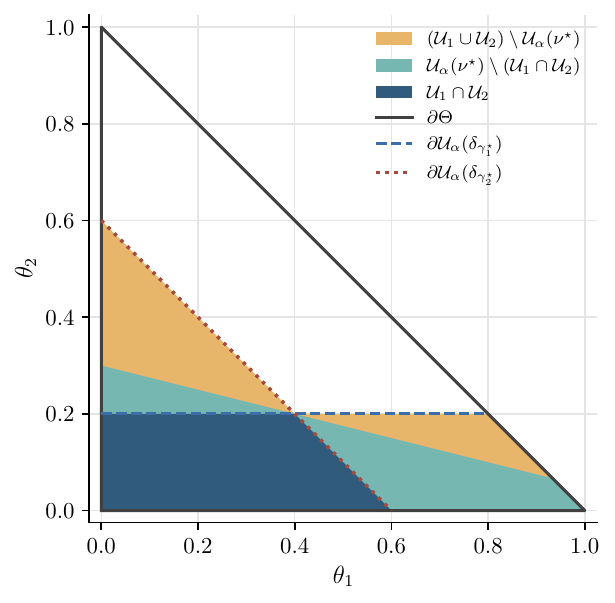}
    \caption{
    Visualization of the constructed latent uncertainty set with \texttt{LatentCP} with the categorical example used in Figure~\ref{fig:tradeoff}(d).
    Let $\mathcal{U}_i$ denote the single-level uncertainty constructed by using the point mass $\delta_{\gamma_i^\star}$ according to \eqref{eq:lawcp-set}, the $e$-value-based mixture set $\mathcal{U}_\alpha(\nu^\star)$ strictly contains the intersection set $\mathcal{U}_1 \cap \mathcal{U}_2$, but is also contained within the union set $\mathcal{U}_1 \cup \mathcal{U}_2$, which directly illustrates Proposition~\ref{prop:aggregation-sandwich}.
    }
    \label{fig:sandwich}
\end{figure}

Figure~\ref{fig:sandwich} illustrates how multilevel aggregation changes the geometry of the latent uncertainty set using a two-level categorical example with the same setup as Figure~\ref{fig:tradeoff}(d). Comparing the aggregated set with the two fixed-level sets $\mathcal{U}_1$ and $\mathcal{U}_2$ demonstrates the strict inclusions established in Proposition~\ref{prop:aggregation-sandwich}.
One interpretation of the sandwich behavior is that multilevel aggregation balances two opposing effects. ($i$) Different response levels may provide complementary evidence against different regions of the latent parameter space. If one level assigns large incompatibility scores to one group of candidates and another level does so for a different group, their weighted average may exceed $1/\alpha$ on both groups, making the aggregated set smaller than either fixed-level set. ($ii$) Aggregation may also dilute a strong level-specific signal. A candidate that is strongly incompatible at one level may receive small scores at the remaining levels, causing its averaged score to fall below $1/\alpha$ and the candidate to be retained despite its exclusion by the most informative level. As a result, $\mathcal{U}_\alpha(x;\nu)$ need not be contained in, or contain, any particular fixed-level set $\mathcal{U}_\alpha(x;\gamma_k)$.

To quantify the efficiency of an uncertainty set, let $\mu$ be a measure on $\Theta$, \eg, Lebesgue measure for a continuous parameter space, counting measure for a discrete parameter space, or a problem-specific weighted measure. Define the population expected size of the aggregated procedure as
\[
\overline{\mu}(\nu)\coloneqq\mathbb E\left[\mu\left(\mathcal{U}_\alpha(X_{n+1};\nu)\right)\right].
\]
We seek an optimal weighting distribution that solves
\begin{equation}
    \label{eq:optimization}
    \nu^\star\in\arg\min_{\nu\in\mathcal P(\Gamma)}\overline{\mu}(\nu).
\end{equation}

Because every point mass $\delta_\gamma$ belongs to $\mathcal P(\Gamma)$,
\[
\inf_{\nu\in\mathcal P(\Gamma)}\overline{\mu}(\nu)\leq\inf_{\gamma\in\Gamma}\mathbb E\left[\mu\left(\mathcal{U}_\alpha(X_{n+1};\gamma)\right)\right].
\]
Thus, at the population-oracle level, optimization over the expanded multi-level class cannot yield a larger expected set than the best fixed-level procedure. A nondegenerate distribution $\nu$ may strictly improve upon all point masses when different response levels provide sufficiently complementary information. Conversely, when one level provides the strongest discrimination throughout the relevant latent parameter space, assigning positive weight to other levels may only dilute its signal, and an optimal distribution may concentrate on a single level.

This distinction is illustrated by Figure~\ref{fig:tradeoff}. In the Gaussian, Poisson, and Bernoulli settings, the optimized multilevel and single-level constructions attain the same average set size, and the optimal pair $(\gamma^\star,\gamma'^\star)$ lies on the diagonal $\gamma'=\gamma$. Thus, the optimized mixing distribution collapses to a point mass, and the multilevel construction reduces to its single-level counterpart. By contrast, in the final setting, the optimal pair satisfies $\gamma'^\star\neq\gamma^\star$, and the resulting nondegenerate mixing distribution yields a strictly smaller average set size than the best single-level construction. This example demonstrates that multilevel aggregation can provide either no or strict efficiency improvement under different distributional conditions.

In practice, $\nu$ must be selected independently of the calibration sample used to construct the final conformal response sets. To achieve this independence, one may use an auxiliary tuning sample set to estimate $\overline{\mu}(\nu)$ and select a distribution $\widehat{\nu}$. The selected distribution is then held fixed while the conformal response sets and the final latent uncertainty set are constructed using an independent calibration sample. Conditional on the tuning sample, $\widehat{\nu}$ is fixed and independent of the calibration sample and the future instance. Therefore, Theorem~\ref{thm:multilevel-validity} applies conditionally on the tuning sample, and the resulting procedure retains its finite-sample marginal coverage guarantee.

\subsection{Computational Implementation}
\label{sec:computational-implementation}

Direct optimization over all probability distributions supported on the response-level space in \eqref{eq:optimization} is generally difficult because the class is infinite-dimensional. Although one could, in principle, use a sampling-based procedure to search over candidate distributions, such a procedure may still be computationally demanding. We restrict the weighting distribution to a simplex $\Delta_K
\coloneqq
\left\{
w\in[0,1]^K:
\sum_{k=1}^K w_k=1
\right\},$ and define
\[
\begin{aligned}
\mathcal{P}_K(\Gamma)
&\coloneqq
\left\{
\nu_{\boldsymbol{\gamma},w}
=
\sum_{k=1}^{K}w_k\delta_{\gamma_k}
:
\boldsymbol{\gamma}
=
(\gamma_1,\ldots,\gamma_K)\in\Gamma^K,\ 
w\in\Delta_K
\right\}.
\end{aligned}
\]
The optimization in \eqref{eq:optimization} therefore reduces to selecting the locations of the $K$ response-space miscoverage rates and their aggregation weights. When $K=1$, this construction recovers the original fixed-level procedure by selecting a single point mass. When $K>1$, the algorithm jointly optimizes multiple levels and aggregates their evidence through the weighted incompatibility score
\[
e_{\nu}(\theta,x)
=
\sum_{k=1}^{K} w_k e_{\gamma_k}(\theta,x).
\]
This restriction converts the infinite-dimensional optimization over distributions into a finite-dimensional optimization over support locations and simplex weights.

Once the weighting distribution has been selected and fixed, constructing the latent uncertainty set requires evaluating the forward probability in~\eqref{eq:forward-compatibility} at each of its support levels and aggregating the resulting incompatibility scores.
Each forward probability can be computed analytically, by numerical integration, or using a Monte Carlo approximation when the forward law is tractable. For a discrete latent parameter space, we evaluate this quantity at every parameter value. 
For a continuous latent parameter space, we evaluate it over a finite design $\Theta_N=\{\theta_1,\ldots,\theta_N\}$ and retain candidates according to the rule $p_\gamma(\theta,x)\geq 1-\gamma/\alpha$ for a point mass, or $e_\nu(\theta,x)\leq 1/\alpha$ for the multilevel construction. Tensor grids are convenient in low dimensions, whereas space-filling designs and local refinement near the boundary $e_\nu(\theta, x)=1/\alpha$ are more suitable in higher dimensions. The evaluations are independent across candidates and can therefore be parallelized. 

The same forward draws can be reused across all support levels and candidate support tuples. After computing their nonconformity scores once, changing a support level only changes the corresponding calibrated threshold
$
\widehat{q}_{1-\gamma_k}.
$
Moreover, the optimization over support locations need only consider the conformal breakpoints generated by the independent tuning sample set. The resulting finite search can be performed jointly for a small number of support points or through coordinate and local-search procedures when the number is larger. Appendix~\ref{app:computational-details} gives the complete procedure, its computational complexity, and a conservative enlargement that accounts for parameter discretization and numerical integration or Monte Carlo error.

\section{Experiments}
\label{sec:experiments}

This section evaluates the validity and efficiency of \texttt{LatentCP} across synthetic and real-data settings. We examine whether it maintains nominal latent coverage under weak identification, observational nonidentifiability, and model heterogeneity, and compare its set size with Bayesian, likelihood-based, noisy-label conformal, and oracle baselines. We also assess the benefits of miscoverage-rate tuning and multilevel aggregation.

\paragraph{Baseline Methods and Evaluation}

We evaluate three variants of our method: \texttt{LatentCP Base}, which uses the prespecified response-space miscoverage rate $\gamma=\alpha/2$; \texttt{LatentCP Tuned}, which selects $\gamma$ by minimizing average latent-set size on an independent tuning sample; and \texttt{LatentCP Multi-$\gamma$}, which uses the same tuning sample to jointly select two levels $\gamma_1,\gamma_2$ and a weight $w$, and aggregates them. After tuning, all conformal thresholds are recalibrated on an independent calibration sample. We compare these variants with four representative baselines. \texttt{EmpBayes} posits a parametric family for the latent
mixing distribution, estimates its parameters from the training data, and returns the smallest region with probability at least $1-\alpha$ \citep{efron1973stein,ignatiadis2019covariate}. \texttt{NPMLE} replaces the parametric mixing model with a discrete mixing distribution estimated by maximizing the marginal response likelihood over the common latent-parameter grid and returns the corresponding plug-in $1-\alpha$ latent region \citep{lindsay1995mixture,soloff2025multivariate}. \texttt{Noisy CP} computes a point-valued inverse estimate of the latent parameter from each observed context--response pair, treats these estimates as noisy calibration labels, and applies standard split conformal prediction \citep{lei2018distribution,einbinder2024label,feldman2025conformal}. Finally, \texttt{Oracle CP} applies split conformal prediction directly to the true calibration latent parameters; although unavailable in practice, it provides a reference for the efficiency attainable with fully observed latent labels \citep{lei2018distribution}. All feasible methods use the same training, tuning, calibration, and test splits and receive only the context of each test instance; neither its response nor its latent parameter is observed. We use target coverage $1-\alpha=0.9$ throughout and evaluate each method by empirical latent coverage and average uncertainty-set size, measured by cardinality for discrete parameter spaces and grid-weighted volume for continuous spaces. Figures report set size normalized by the mean Oracle CP size within each setting, and all tables and figures report averages and standard errors across independent simulation runs or real-data splits. Efficiency comparisons are made only among methods attaining the nominal coverage level. 
More details on the experimental setup are deferred to Appendix~\ref{app:experiment}.

\subsection{Synthetic Results}

We generate each synthetic dataset according to the hierarchical mechanism in Section~\ref{sec:problem-setup}. For each unit, we first sample a context $X_i$, then draw an instance-specific latent parameter $\theta_i$ from a model-specific conditional mixing distribution, and finally generate the observable response from the forward law $Y_i\mid(X_i,\theta_i)\sim P_{\theta_i}(\cdot\mid X_i)$. The latent parameters are retained only for evaluating coverage and implementing Oracle CP; they are hidden from \texttt{LatentCP} and all other feasible methods. Each dataset is divided into independent training, tuning, calibration, and test samples. The tuning sample is used to select the response-space miscoverage rate or multilevel aggregation distribution, after which the conformal response sets are recalibrated on the calibration sample. All methods are evaluated on the same test instances at target coverage $1-\alpha=0.9$. We report empirical latent coverage and average uncertainty-set size, using cardinality for discrete parameter spaces and volume for continuous spaces. Set sizes in Figures~\ref{fig:res-coverage-eff} are normalized by the mean Oracle CP size within each setting.

\begin{figure}[!t]
    \centering
    \captionsetup[subfigure]{font=footnotesize,justification=centering}
    \includegraphics[width=0.92\linewidth]{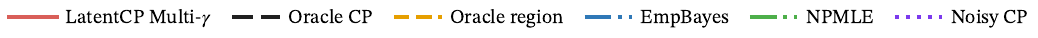}
    \begin{subfigure}[t]{0.238\linewidth}
        \centering
        \includegraphics[width=\linewidth]{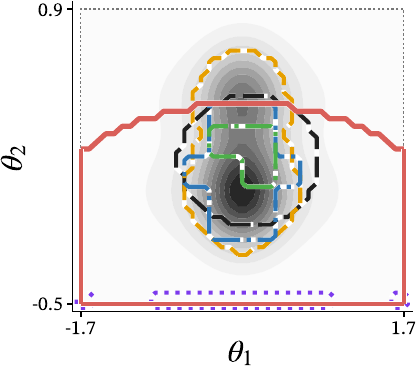}
        \caption{Gaussian.}
        \label{fig:general-2d-latent-plane-gaussian}
    \end{subfigure}\hfill
    \begin{subfigure}[t]{0.238\linewidth}
        \centering
        \includegraphics[width=\linewidth]{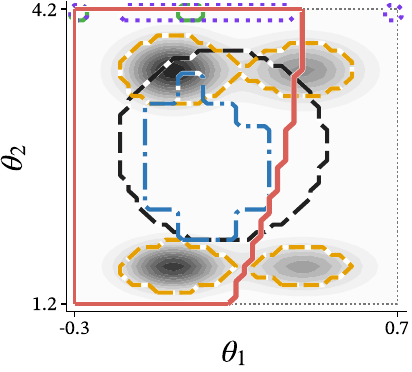}
        \caption{Negative-binomial.}
        \label{fig:general-2d-latent-plane-negbin}
    \end{subfigure}\hfill
    \begin{subfigure}[t]{0.238\linewidth}
        \centering
        \includegraphics[width=\linewidth]{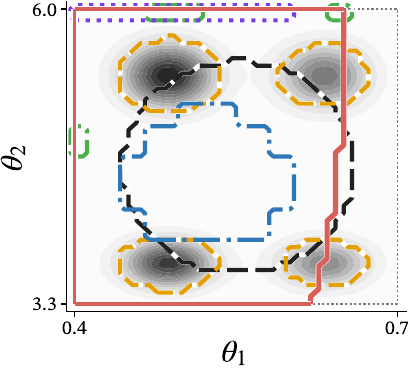}
        \caption{Beta-binomial.}
        \label{fig:general-2d-latent-plane-betabin}
    \end{subfigure}\hfill
    \begin{subfigure}[t]{0.238\linewidth}
        \centering
        \includegraphics[width=\linewidth]{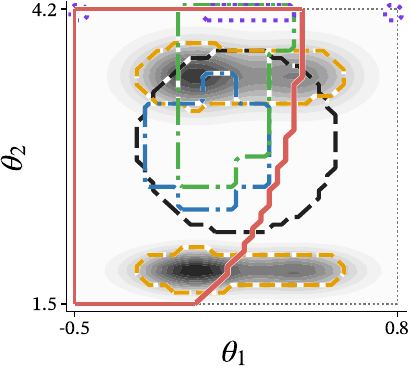}
        \caption{Student $t$.}
        \label{fig:general-2d-latent-plane-student-t}
    \end{subfigure}
    \caption{Representative two-dimensional latent uncertainty sets for four forward-model families at a fixed test context. Gray shading represents the true conditional density of the latent parameter, while colored contours delineate the uncertainty sets produced by \texttt{LatentCP} and the competing methods at the target miscoverage level $\alpha=0.3$. The panels illustrate how the methods differ in their treatment of latent dependence, weak identification, and multiple observationally compatible parameter regions.}
    \label{fig:res-latent-plane}
\end{figure}
The experiments cover both one- and two-dimensional latent parameters and are designed to isolate different forms of indirect-observation difficulty. The one-dimensional settings include a Gaussian mixture model with $Y\mid(\theta,X)\sim\mathcal{N}(\theta,0.2^2)$, a sign-nonidentifiable model with $Y\mid(\theta,X)\sim\mathcal{N}(\theta^2,0.25^2)$, a three-regime Gaussian-mixture model with crossing response distributions, and an aliased point-mass mixture with uniform contamination. These settings respectively represent latent heterogeneity, exact observational nonidentifiability, weak separation between latent regimes, and discrete aliasing. We also consider four two-dimensional forward models: Gaussian location--scale, negative-binomial, beta-binomial, and Student $t$, each with a context-dependent multimodal latent distribution. Each synthtic experiment is averaged over $50$ independent runs

\paragraph{Geometry of the latent sets.} 
Figure~\ref{fig:res-latent-plane} visualizes representative two-dimensional uncertainty sets for Gaussian, negative-binomial, beta-binomial, and Student-$t$ forward models at a fixed test context. The gray shading shows the true conditional latent density for evaluation only and is unavailable to \texttt{LatentCP}. The multilevel \texttt{LatentCP} sets reflect the geometry of forward compatibility. In contrast, empirical-Bayes, likelihood-based, and noisy-label procedures may concentrate on a single estimated mode or inverse branch, excluding other observationally compatible regions. Oracle CP applies split conformal prediction using the true latent calibration labels, whereas the Oracle region additionally assumes knowledge of the full conditional latent distribution and returns its smallest $1-\alpha$ probability region.

\begin{figure}[!t]
    \centering
    \captionsetup[subfigure]{font=footnotesize,justification=centering}

    \includegraphics[width=0.98\linewidth]{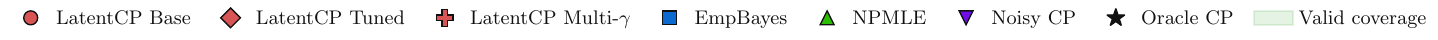}

    \begin{subfigure}[t]{0.238\linewidth}
        \centering
        \includegraphics[width=\linewidth]{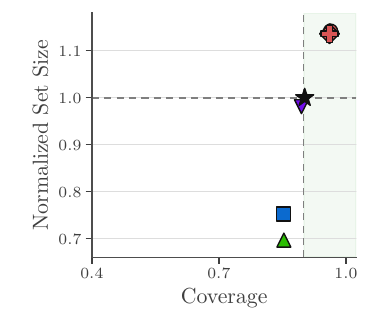}
        \caption{Gaussian mixture.
        }
        \label{fig:single-observation-mixture}
    \end{subfigure}\hfill
    \begin{subfigure}[t]{0.238\linewidth}
        \centering
        \includegraphics[width=\linewidth]{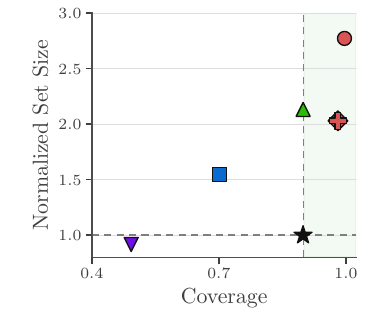}
        \caption{Sign non-identify.
        }
        \label{fig:single-observation-sign}
    \end{subfigure}\hfill
    \begin{subfigure}[t]{0.238\linewidth}
        \centering
        \includegraphics[width=\linewidth]{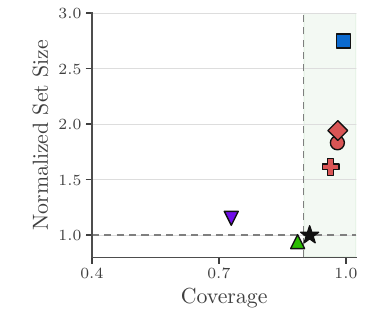}
        \caption{Regime crossing.
        }
        \label{fig:single-observation-crossing}
    \end{subfigure}\hfill
    \begin{subfigure}[t]{0.238\linewidth}
        \centering
        \includegraphics[width=\linewidth]{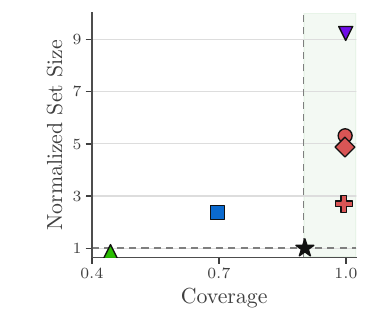}
        \caption{Aliased spikes.
        }
        \label{fig:single-observation-dilution}
    \end{subfigure}

    \vfill

    \begin{subfigure}[t]{0.238\linewidth}
        \centering
        \includegraphics[width=\linewidth]{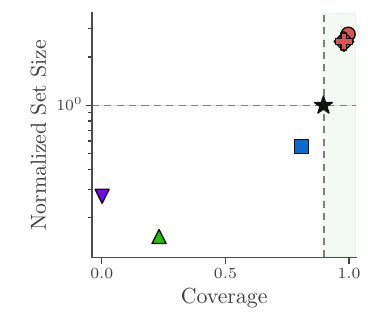}
        \caption{Gaussian.
        }
        \label{fig:selected-two-dimensional-student}
    \end{subfigure}\hfill
    \begin{subfigure}[t]{0.238\linewidth}
        \centering
        \includegraphics[width=\linewidth]{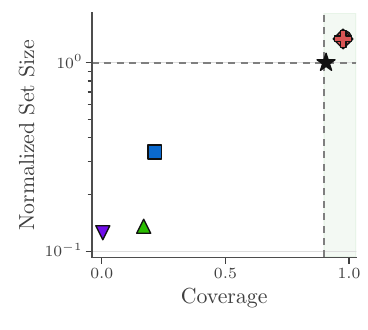}
        \caption{Negative-binomial.
        }
        \label{fig:selected-two-dimensional-betabin}
    \end{subfigure}\hfill
    \begin{subfigure}[t]{0.238\linewidth}
        \centering
        \includegraphics[width=\linewidth]{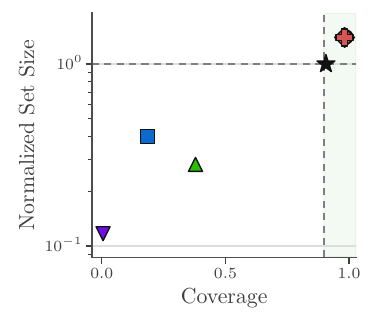}
        \caption{Beta-binomial.
        }
        \label{fig:selected-two-dimensional-location-increasing}
    \end{subfigure}\hfill
    \begin{subfigure}[t]{0.238\linewidth}
        \centering
        \includegraphics[width=\linewidth]{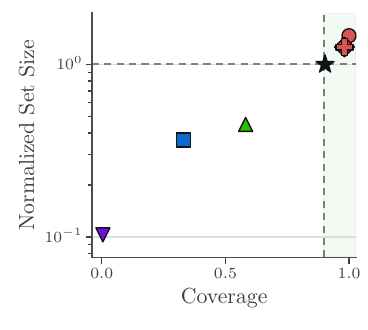}
        \caption{Student $t$.
        }
        \label{fig:selected-two-dimensional-location-decreasing}
    \end{subfigure}

    \caption{Coverage--efficiency comparison across one- and two-dimensional synthetic settings. The top row shows four one-dimensional stress tests, and the bottom row shows four two-dimensional latent-parameter models; exact forward distributions and parameter grids are provided in Appendix~\ref{app:experiment}. Each point reports mean latent coverage and set size normalized by the mean Oracle CP set size within the corresponding setting. The shaded region indicates coverage at or above the target level $1-\alpha=0.9$, while the dashed lines mark nominal coverage and normalized Oracle CP size. The vertical axes are logarithmic in the two-dimensional settings.}
    \label{fig:res-coverage-eff}
\end{figure}

\paragraph{Coverage--efficiency analysis.}
Figures~\ref{fig:res-coverage-eff} show that all \texttt{LatentCP} variants attain the target latent coverage across both one- and two-dimensional settings. The competing empirical-Bayes, likelihood-based, and noisy-label conformal methods can be efficient when their prior or inverse assumptions are favorable, but frequently under-cover under sign nonidentifiability, regime crossing, discrete aliasing, and interactions among multiple latent components. In contrast, \texttt{LatentCP} retains all parameter values whose forward distributions remain compatible with the observable response evidence. Its sets are consequently larger than those of Oracle CP, particularly when distinct latent parameters induce similar responses; this gap reflects the information lost when latent calibration labels are unavailable. Independent tuning and multilevel aggregation generally reduce set size relative to the fixed-level construction while preserving validity, with the largest gains occurring when different response-space miscoverage rates provide complementary information.

\subsection{Real Data Results}
We use Monitoring Trends in Burn Severity (MTBS) records of mapped large California wildfires 
\citep{eidenshink2007project,finco2012monitoring}, and focus on $1,689$ wildfire events recorded from $1984$--$2023$. We consider two spatial resolutions: California's 58 counties for visual illustration and $50$-km grid cells for the formal analysis. Let $s$ index a spatial unit under the chosen resolution, and divide the $40$-year study period into ten nonoverlapping four-year windows indexed by $t$. For each spatial-unit--window pair $(s,t)$, $Y_{s,t}$ is the recorded wildfire count, while $X_{s,t}\in\mathbb R^{10}$ includes standardized spatial coordinates and time, log exposure, preceding-window count and historical rate through $t-1$. We then model the observed count as
\[
Y_{s,t}\mid X_{s,t},\lambda_{s,t}
\sim \operatorname{Poisson}(\Delta_{s,t}\lambda_{s,t}),
\]
where $\Delta_{s,t}$ is area--time exposure and $\lambda_{s,t}$ is the latent
fire-occurrence rate. We fit the response model on $1984$--$2003$, tune $\gamma$ on $2004$--$2007$, calibrate on $2008$--$2015$, and evaluate on $2016$--$2023$, using $\alpha=0.10$.

We first present an illustrative county-level analysis in Figure~\ref{fig:setting}, where $s$ indexes California's 58 counties. Panel (a) shows the observed counts for 2020--2023, panels (b)--(c) show the LatentCP lower and upper intensity endpoints, and panel (d) shows interval width. The results reveal substantial spatial heterogeneity: lower endpoints remain near zero in many counties, whereas the upper endpoints and widths highlight regions with greater latent uncertainty.  Figure~\ref{fig:wildfire-overlay} reports the formal analysis using 179 retained $50$-km grid cells, providing more spatial units than the county-level aggregation. Relative to the fixed choice $\gamma=\alpha/2$, tuning reduces mean latent-set size by $11.1\%$ across the test observations. The figure distinguishes the two inferential targets: $\mathcal C_\gamma$ represents uncertainty in the future fire event count, whereas $\mathcal U_\alpha$ directly represents uncertainty in the underlying intensity after transfer through the Poisson forward model. The dashed Poisson MLE band is a standard model-based reference, obtained by substituting the fitted intensity $\widehat \lambda$ into the Poisson variance. Unlike this plug-in approximation, LatentCP incorporates uncertainty calibrated from the observed population. Because counts and intensities have different units, their relative widths on the shared nonlinear axis are for illustrative purpose. Overall, LatentCP converts response-level uncertainty into spatially adaptive uncertainty sets for the unobserved fire intensity.

\begin{figure}[!t]
    \centering
    \captionsetup[subfigure]{font=footnotesize,justification=centering}
    \setlength{\wildfirepanelheight}{0.266\linewidth}
    \begin{subfigure}[t]{0.205\linewidth}
        \centering
        \includegraphics[width=\linewidth]{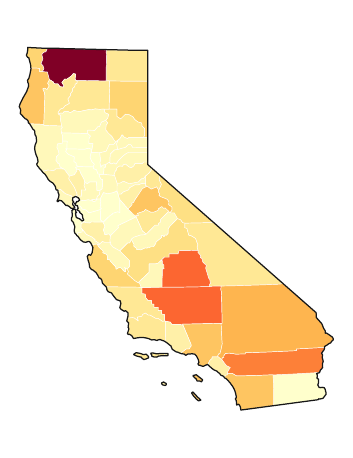}
        \includegraphics[width=\linewidth]{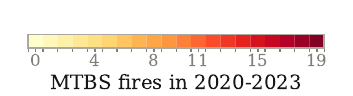}
        \caption{Observed $Y$.}
        \label{fig:setting-observed}
    \end{subfigure}\hfill
    \begin{subfigure}[t]{0.205\linewidth}
        \centering
        \includegraphics[width=\linewidth]{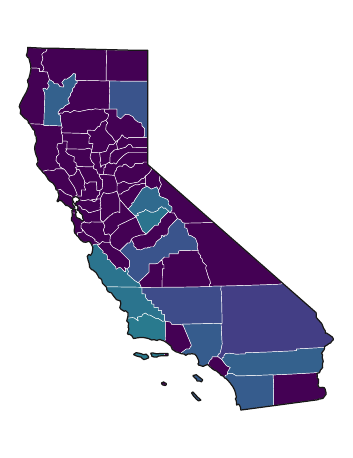}
        \phantom{\includegraphics[width=\linewidth]{figures/spatial_wildfire_real_mtbs_county_sharedscale_colorbar_observed.pdf}}
        \caption{Lower bound.}
        \label{fig:setting-lower}
    \end{subfigure}\hfill
    \begin{subfigure}[t]{0.205\linewidth}
        \centering
        \includegraphics[width=\linewidth]{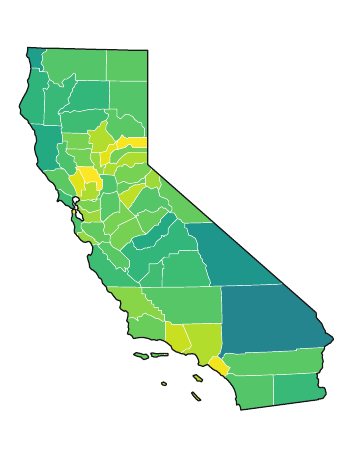}
        \phantom{\includegraphics[width=\linewidth]{figures/spatial_wildfire_real_mtbs_county_sharedscale_colorbar_observed.pdf}}
        \caption{Upper bound.}
        \label{fig:setting-upper}
    \end{subfigure}\hfill
    \begin{subfigure}[t]{0.205\linewidth}
        \centering
        \includegraphics[width=\linewidth]{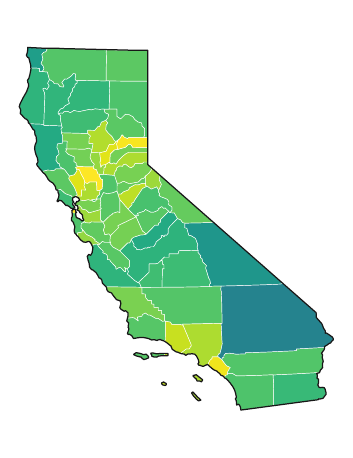}
        \phantom{\includegraphics[width=\linewidth]{figures/spatial_wildfire_real_mtbs_county_sharedscale_colorbar_observed.pdf}}
        \caption{Set size.}
        \label{fig:setting-width}
    \end{subfigure}\hfill
    \begin{subfigure}[t]{0.09\linewidth}
        \centering
        \includegraphics[height=\wildfirepanelheight]{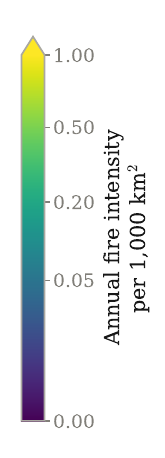}%
    \end{subfigure}%
\caption{County-level wildfire counts and latent-intensity uncertainty in California. Panel (a) shows MTBS counts for $2020$--$2023$; panels (b)--(d) show the \texttt{LatentCP} lower bound, upper bound, and interval width, respectively.}
    \label{fig:setting}
\end{figure}

\begin{figure}[!t]
    \centering
    \includegraphics[width=\linewidth]{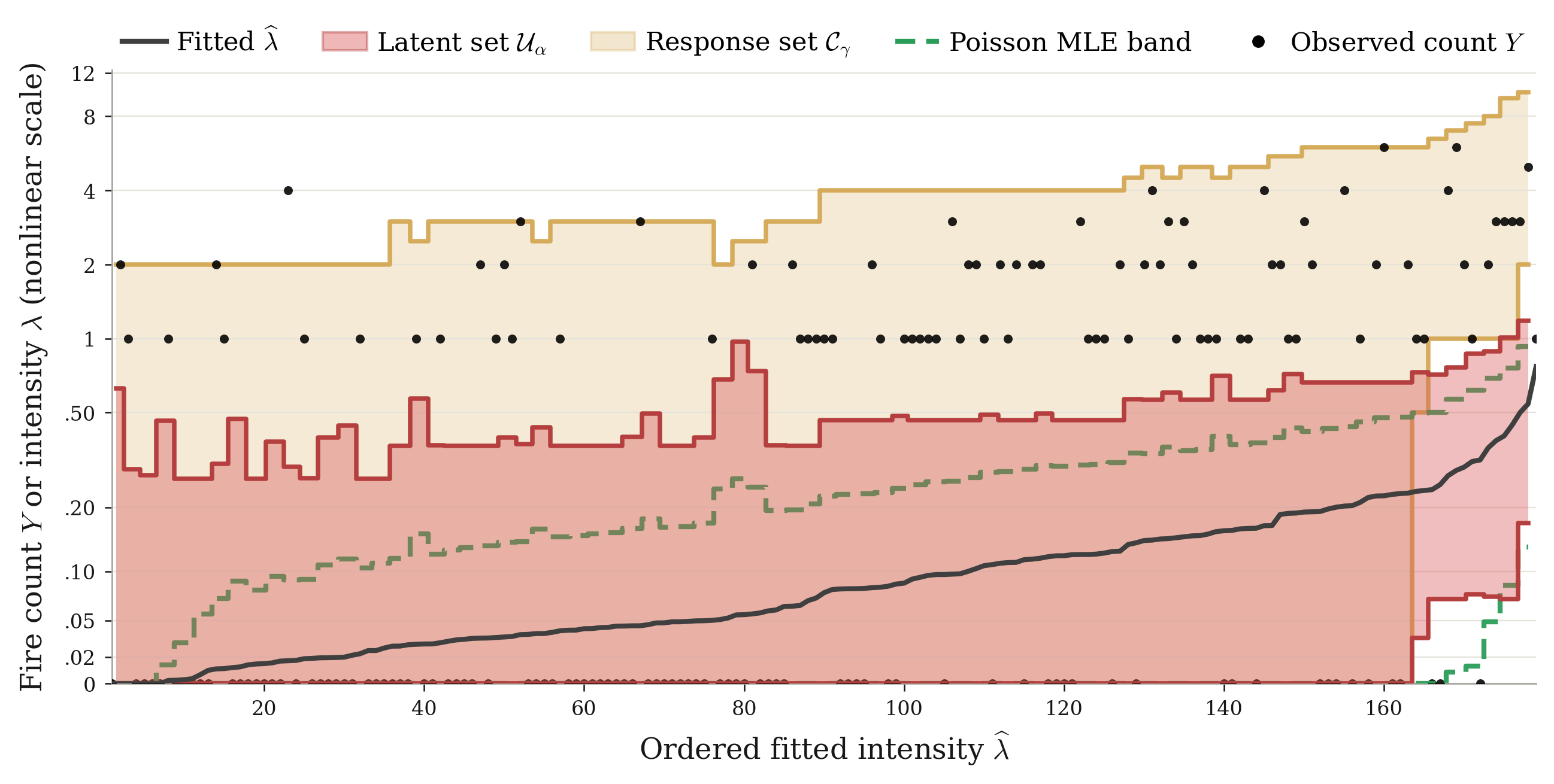}
    \caption{Response and latent uncertainty for California wildfire data.
The x-axis indexes grid cells ordered by their fitted intensities
$\widehat{\lambda}$, while the shared nonlinear y-axis displays fire counts and latent intensities. Shading area shows $\mathcal C_\gamma$ and $\mathcal U_\alpha$; points are observed counts, and dashed curves show a baseline Poisson Wald--type uncertainty band.}
\label{fig:wildfire-overlay}
\end{figure}

\section{Conclusion and Discussion}

This paper introduced \texttt{LatentCP}, a conformal framework for constructing finite-sample valid uncertainty sets for unobserved, instance-specific distributional parameters. By transferring a conformal response set through a specified forward model, \texttt{LatentCP} avoids requiring latent calibration labels, a unique inverse mapping, or knowledge of the latent mixing distribution. Its multilevel extension aggregates $e$-value-like incompatibility scores across response-space miscoverage rates to improve efficiency while preserving validity. Experiments demonstrate reliable coverage under weak identification, observational aliasing, and confounding, where methods relying on estimated latent labels or mixing distributions may under-cover.

Several limitations suggest directions for future work. The framework provides marginal rather than conditional coverage and relies on an adequately specified forward model. Extending it to forward-model misspecification, context-dependent validity, and high-dimensional or distribution-valued latent parameters constitutes important future work. Scalable compatibility evaluation, richer multilevel aggregation, and applications involving repeated or dependent responses are also promising directions. Overall, \texttt{LatentCP} provides a principled bridge from conformal prediction of observable responses to uncertainty quantification for their latent generative mechanisms.

\bibliographystyle{plainnat}
\bibliography{ref}

\newpage
\appendix

\section{Why Set-Valued Latent Inference Is Necessary}
\label{app:set-valued-inference}

This appendix formalizes two distinct obstacles to point identification of the latent distributional parameter. The first is \emph{latent heterogeneity}: the latent parameter of a new sample remains random after conditioning on its observed context. The second is \emph{observational non-identifiability}: different latent parameters may induce identical response distributions.

\subsection{Latent Heterogeneity}

Consider first the standard independent-sampling setting in which the calibration data provide information about the population but not about the particular latent draw associated with the new sample. Formally, suppose that
\[
    \theta_{n+1}\indep\mathcal{D}_n
    \mid X_{n+1}.
\]
This condition holds when the calibration samples and the new sample are independently generated according to \eqref{eq:hierarchical-model}. Conditional on $\mathcal{D}_n$ and $X_{n+1}=x$, the latent parameter therefore continues to follow $\pi(\cdot\mid x)$.

The following result bounds the probability with which any point estimator can exactly recover this latent draw.

\begin{lemma}[Impossibility of exact identification]
\label{lem:impossibility}
Suppose that
$\theta_{n+1}\indep\mathcal{D}_n\mid X_{n+1}$.
Let $\widehat\theta(\mathcal{D}_n,X_{n+1})$ be any possibly randomized estimator taking values in $\Theta$. Then
\begin{equation}
\label{eq:point-identification-bound}
    \mathbb P\left\{
        \widehat\theta(\mathcal{D}_n,X_{n+1})
        =\theta_{n+1}
    \right\}
    \leq
    \mathbb E\left[
        \sup_{t\in\Theta}
        \pi(\{t\}\mid X_{n+1})
    \right].
\end{equation}
In particular, if $\pi(\cdot\mid x)$ is non-atomic for
$P_X$-almost every $x$, then
\[
    \mathbb P\left\{
        \widehat\theta(\mathcal{D}_n,X_{n+1})
        =\theta_{n+1}
    \right\}=0
\]
for every point estimator and every calibration sample size.
\end{lemma}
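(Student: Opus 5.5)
The plan is to condition on everything the estimator can see, namely $(\mathcal{D}_n, X_{n+1})$ together with any auxiliary randomization $U$ used by $\widehat\theta$. We take $U$ independent of $(\mathcal{D}_n, X_{n+1}, \theta_{n+1})$, which is the standard meaning of a randomized estimator. Then $\widehat\theta = g(\mathcal{D}_n, X_{n+1}, U)$ for a measurable map $g$.

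The first step is to show that $\theta_{n+1}$ given $(\mathcal{D}_n, X_{n+1}, U)$ still has law $\pi(\cdot\mid X_{n+1})$. The hypothesis $\theta_{n+1}\indep\mathcal{D}_n\mid X_{n+1}$ handles $\mathcal{D}_n$, and the independence of $U$ allows it to be adjoined to the conditioning set.

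The second step computes the probability of exact recovery by iterated expectation:
\[
\mathbb P\{\widehat\theta=\theta_{n+1}\}
=\mathbb E\left[\mathbb P\{\theta_{n+1}=g(\mathcal{D}_n,X_{n+1},U)\mid \mathcal{D}_n,X_{n+1},U\}\right]
=\mathbb E\left[\pi(\{\widehat\theta\}\mid X_{n+1})\right].
\]
Here we used that, given the conditioning variables, $\widehat\theta$ is a fixed point $t$, so the inner probability equals $\pi(\{t\}\mid X_{n+1})$ evaluated at $t=\widehat\theta$.

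The third step bounds the integrand pointwise by $\sup_{t\in\Theta}\pi(\{t\}\mid X_{n+1})$, which gives \eqref{eq:point-identification-bound}. For the non-atomic case, this supremum equals $0$ for $P_X$-almost every $x$. Since $X_{n+1}\sim P_X$, the expectation vanishes, and this holds for every $n$ because $n$ played no role in the argument.

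The main obstacle is the measure-theoretic justification of the second step: using a regular conditional distribution and the substitution (freezing) lemma to replace $\theta_{n+1}$'s conditional law with $\pi(\cdot\mid X_{n+1})$ evaluated at the random point $\widehat\theta$. This requires joint measurability of $(t,x)\mapsto\pi(\{t\}\mid x)$, which follows if $\Theta$ is a standard Borel space and $\pi$ is a Markov kernel, because the diagonal $\{(t,\theta): t=\theta\}$ is then measurable.

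Measurability of $x\mapsto\sup_t\pi(\{t\}\mid x)$ may also be an issue. If needed, we interpret the expectation in \eqref{eq:point-identification-bound} as an outer expectation, or note that the supremum equals a supremum over the countably many atoms, which makes it measurable. Neither point affects the inequality.
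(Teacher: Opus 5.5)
Your proposal is correct and follows the paper's proof essentially step for step: write $\widehat\theta=h(\mathcal D_n,X_{n+1},R)$ with independent internal randomness, observe that conditionally on $(\mathcal D_n,X_{n+1},R)$ the latent $\theta_{n+1}$ still has law $\pi(\cdot\mid X_{n+1})$, bound the conditional hit probability $\pi(\{h\}\mid X_{n+1})$ by $\sup_{t}\pi(\{t\}\mid X_{n+1})$, and take expectations. Your extra measure-theoretic care goes beyond the paper, and the one loose point there is harmless: the supremum is not obviously measurable merely because it runs over countably many atoms, since those atoms move with $x$, but your outer-expectation fallback settles it, as does writing the largest atom mass as the limit of the largest cell mass over refining dyadic partitions.
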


\begin{proof}
Let $R$ denote any internal randomness used by the estimator, chosen independently of the data and the new sample. We may therefore write
\[
    \widehat\theta(\mathcal{D}_n,X_{n+1})
    =
    h(\mathcal{D}_n,X_{n+1},R)
\]
for some measurable function $h$.

Condition on $\mathcal{D}_n$, $X_{n+1}=x$, and $R$. Under the conditional-independence assumption, the conditional distribution of $\theta_{n+1}$ remains $\pi(\cdot\mid x)$. Hence,
\begin{align*}
    &\mathbb P\left\{
        \widehat\theta(\mathcal{D}_n,X_{n+1})
        =\theta_{n+1}
        ~\middle|~
        \mathcal{D}_n,X_{n+1}=x,R
    \right\} \\
    =&
    \pi\left(
        \left\{h(\mathcal{D}_n,x,R)\right\}
        ~\middle|~x
    \right)\\
    \leq &
    \sup_{t\in\Theta}\pi(\{t\}\mid x).
\end{align*}
Taking expectations over $\mathcal{D}_n$, $X_{n+1}$, and $R$ yields
\eqref{eq:point-identification-bound}. If
$\pi(\cdot\mid x)$ is non-atomic, then
$\pi(\{t\}\mid x)=0$ for every $t\in\Theta$, so the right-hand side of
\eqref{eq:point-identification-bound} is zero.
\end{proof}

Lemma~\ref{lem:impossibility} has a direct interpretation in the discrete case. If the largest conditional probability among the possible latent classes is $0.6$, no point estimator based only on $\mathcal{D}_n$ and $X_{n+1}$ can achieve an exact-recovery probability greater than $0.6$. A set containing multiple latent classes is therefore necessary to obtain, for example, $90\%$ coverage. In the continuous case, every singleton has probability zero, so a non-singleton set is necessary for any positive coverage rate.

The lemma concerns exact recovery before observing $Y_{n+1}$. It does not preclude accurate point estimation under a loss such as absolute or squared error, nor does it determine how large a valid uncertainty set must be. Its purpose is to explain why a singleton cannot generally satisfy the coverage objective in \eqref{eq:latent-coverage}.

\subsection{Observational Non-Identifiability}

Latent heterogeneity is distinct from non-identifiability of the forward model. For a fixed context $x$, define the observational equivalence relation
\[
    \theta\equiv_x\theta'
    \quad\Longleftrightarrow\quad
    P_\theta(\cdot\mid x)
    =
    P_{\theta'}(\cdot\mid x).
\]
Parameters in the same equivalence class generate identical observable response distributions and therefore cannot be distinguished through response data alone.

\begin{lemma}[Observational equivalence]
\label{lem:observational-equivalence}
Fix $x\in\mathcal X$ and suppose that
$\theta\equiv_x\theta'$. Let $Y$ be generated from either
$P_\theta(\cdot\mid x)$ or $P_{\theta'}(\cdot\mid x)$. Then every measurable statistic $T(Y,x)$ has the same distribution under $\theta$ and $\theta'$. In particular, for any possibly randomized test
$\varphi:\mathcal Y\times\mathcal X\rightarrow[0,1]$,
\[
    \mathbb E_{Y\sim P_\theta(\cdot\mid x)}\left[\varphi(Y,x)\mid x\right]
    =
    \mathbb E_{Y\sim P_{\theta'}(\cdot\mid x)}\left[\varphi(Y,x)\mid x\right].
\]
Consequently, no response-based test can distinguish $\theta$ from $\theta'$ with power exceeding its size.
\end{lemma}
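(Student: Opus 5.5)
The argument is essentially a change of variables: since $P_\theta(\cdot\mid x)=P_{\theta'}(\cdot\mid x)$ by assumption, the pushforward of a single measure under a single map cannot depend on which label we attach to that measure. I will fix $x$ and a measurable statistic $T:\mathcal Y\times\mathcal X\to\mathcal T$. For any measurable set $B\subseteq\mathcal T$, consider the section $A_B\coloneqq\{y\in\mathcal Y: T(y,x)\in B\}$. Measurability of $T$ in $y$ (for fixed $x$, sections of jointly measurable maps are measurable) makes $A_B$ a measurable subset of $\mathcal Y$. Then
\[
\mathbb P_{Y\sim P_\theta(\cdot\mid x)}\{T(Y,x)\in B\}=P_\theta(A_B\mid x)=P_{\theta'}(A_B\mid x)=\mathbb P_{Y\sim P_{\theta'}(\cdot\mid x)}\{T(Y,x)\in B\}.
\]
Since $B$ is arbitrary, the two laws of $T(Y,x)$ coincide, which proves the first claim.

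For the randomized test $\varphi$, I will apply the same reasoning to the bounded measurable function $y\mapsto\varphi(y,x)$. The integral of a bounded measurable function against a measure depends only on that measure, so
\[
\int\varphi(y,x)\,P_\theta(\d y\mid x)=\int\varphi(y,x)\,P_{\theta'}(\d y\mid x).
\]
If one prefers to avoid the one-line integral identity, one can instead use the first part with $T=\varphi$, noting that equality in law gives equality of expectations for $[0,1]$-valued variables. Any external randomization in $\varphi$ is independent of $Y$ and already averaged into the value $\varphi(y,x)\in[0,1]$, so it adds nothing.

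For the ``consequently'' claim, consider testing $H_0:\theta$ against $H_1:\theta'$ (or the reverse) using $\varphi$. The size is $\mathbb E_\theta[\varphi]$ and the power is $\mathbb E_{\theta'}[\varphi]$, and these are equal by the previous step. So power equals size, and in particular never exceeds it.

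There is no real obstacle here. The only point that needs care is the measurability of the section $A_B$, which is standard, together with the convention that randomization is absorbed into $\varphi$ taking values in $[0,1]$.
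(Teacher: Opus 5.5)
Your proof is correct and takes essentially the same approach as the paper. Both arguments rest on the fact that $P_\theta(\cdot\mid x)=P_{\theta'}(\cdot\mid x)$ is a single measure, so integrals of $\varphi(\cdot,x)$ and pushforwards under $T(\cdot,x)$ must coincide; your preimage-set computation and explicit size-equals-power step simply spell out what the paper states in one line.
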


\begin{proof}
Because $\theta\equiv_x\theta'$, the conditional probability measures
$P_\theta(\cdot\mid x)$ and $P_{\theta'}(\cdot\mid x)$ are identical. Therefore, for every measurable function $g$,
\[
\int g(y)\,P_\theta(\mathrm dy\mid x)
=
\int g(y)\,P_{\theta'}(\mathrm dy\mid x).
\]
Taking $g(y)=\varphi(y,x)$ gives the equality of rejection probabilities. More generally, applying any measurable transformation $T$ to two identically distributed responses produces identical pushforward distributions.
\end{proof}

This non-identifiability persists under repeated observations whenever the corresponding joint response laws remain identical. In particular, if responses are conditionally independent and
$P_\theta(\cdot\mid x)=P_{\theta'}(\cdot\mid x)$, then their $m$-fold product distributions are equal for every finite $m$. Repeated observations may distinguish parameters whose forward distributions are different but close; they cannot distinguish parameters whose forward distributions are exactly the same.

Our proposed procedure respects these equivalence classes. For any measurable response set $\mathcal C\subseteq\mathcal Y$, observational equivalence implies
\[
    P_\theta(Y\in \mathcal C\mid x)
    =
    P_{\theta'}(Y\in \mathcal C\mid x).
\]
Consequently, \texttt{LatentCP} applies the same inclusion decision to $\theta$ and $\theta'$. Rather than selecting an arbitrary inverse branch, it retains or excludes observationally equivalent parameters together. The resulting uncertainty set therefore represents both sample-specific latent heterogeneity and ambiguity inherent in the forward model.

\section{Connection to Other Relevant Inference Problems}
\label{app:bayesian}

In this section, we discuss the distinctions between our \texttt{LatentCP} setup and Bayesian inference (Section~\ref{app:real-bayesian}) and likelihood-based inference (Section~\ref{app:likelihood}). They represent two alternative classes of estimators that one may use to construct uncertainty sets for latent parameters, and have already been incorporated in our experiment comparison. This section dives deeper into explaining their key differences.

\subsection{Comparison to Bayesian Inference}
\label{app:real-bayesian}

\begin{table*}[t]
    \centering
    \caption{Comparison of the proposed problem with standard conformal prediction
    and Bayesian inference.}
    \label{tab:problem-comparison}
    \renewcommand{\arraystretch}{1.5}
    \setlength{\tabcolsep}{6pt}
    \begin{tabularx}{\textwidth}
    {
        >{\raggedright\arraybackslash}p{0.16\textwidth}
        >{\centering\arraybackslash}p{0.08\textwidth}
        >{\centering\arraybackslash}p{0.16\textwidth}
        >{\raggedright\arraybackslash}p{0.19\textwidth}
        >{\raggedright\arraybackslash}X
    }
        \toprule
        \textbf{Problem}
        & \textbf{Target}
        & \textbf{Info. for new unit}
        & \textbf{Required knowledge}
        & \textbf{Output} \\
        \midrule

        Standard conformal prediction
        & $Y_{n+1}$
        & $X_{n+1}$
        & Neither $P_\theta$ nor $\pi$
        & Prediction set for the response \\

        \midrule

        Bayesian posterior inference
        & $\theta_{n+1}$
        & $X_{n+1},Y_{n+1}$
        & Forward family and $\pi$
        & Posterior credible set \\

        \midrule

        \makecell[l]{Bayesian or \\ empirical-Bayes}
        & $\theta_{n+1}$
        & $X_{n+1}$
        & Specified or estimated $\pi(\cdot\mid x)$
        & Prior-predictive region \\

        \midrule

        \textbf{LatentCP}
        & $\theta_{n+1}$
        & $X_{n+1}$ only
        & Forward family, but not $\pi$
        & Frequentist marginal prediction set \\

        \bottomrule
    \end{tabularx}
\end{table*}

The hierarchical model in \eqref{eq:hierarchical-model} has a familiar Bayesian interpretation, but our inferential target and information structure are different. Suppose that $P_\theta(\cdot \mid x)$ admits a density or probability mass function $p_\theta(\cdot \mid x)$. For fixed $\theta$, $p_\theta(y\mid x)$ specifies the forward response distribution. After observing $(x,y)$, the same quantity, viewed as a function of $\theta$, serves as the likelihood for the latent parameter. If both the mixing law $\pi(\cdot \mid x)$ and the response $y$ from the instance of interest were available, Bayes' rule would yield
\begin{equation}
    \pi(\mathrm{d}\theta \mid x,y)
    =
    \frac{
        p_\theta(y\mid x)
        \pi(\mathrm{d}\theta \mid x)
    }{
        \displaystyle
        \int_\Theta
        p_{\vartheta}(y\mid x)
        \pi(\mathrm{d}\vartheta \mid x)
    }.
    \label{eq:bayesian-posterior}
\end{equation}
Thus, the forward family in our formulation plays the same mathematical role as the likelihood family in a hierarchical Bayesian model.

The distinction is that we construct $\mathcal{U}_\alpha(X_{n+1})$ before observing $Y_{n+1}$. Consequently, there is no unit-specific likelihood
$
    p_\theta(Y_{n+1}\mid X_{n+1})
$
with which to update uncertainty about $\theta_{n+1}$. Before observing the new response, a Bayesian predictive distribution for $\theta_{n+1}$ would be precisely the mixing law $\pi(\cdot \mid X_{n+1})$. When this law is unknown, a Bayesian or empirical-Bayes approach must specify it, estimate it, or place a hyperprior on it, typically through the marginal mixture
$
    p(y\mid x)
    =
    \int_\Theta
        p_\theta(y\mid x)
        \pi(\mathrm{d}\theta\mid x).
$
Estimating a context-dependent mixing law from indirect observations can be difficult or ill-posed, particularly when each latent draw generates only one response or when the forward family is non-identifiable \citep{ignatiadis2019covariate}.

Our objective is therefore not to approximate the posterior in \eqref{eq:bayesian-posterior}. The set $\mathcal{U}_\alpha$ does not assign posterior probabilities or relative weights to candidate latent parameters. Instead, it is a frequentist prediction set for the random latent parameter $\theta_{n+1}$, with finite-sample marginal coverage that does not require specifying or estimating $\pi(\cdot \mid x)$. This mixing-law robustness does not make the procedure fully model-free: validity continues to rely on exchangeability and on a correctly specified or conservatively approximated forward family.
See Table~\ref{tab:problem-comparison} for a comparison between Bayesian inference paradigms and our conformal prediction workflow.

\subsection{Comparison to Likelihood-Based Inference}
\label{app:likelihood}

Classical likelihood-based confidence intervals, such as Wald and profile-likelihood intervals, address a fundamentally different inferential problem \citep{pawitan2001all}. They typically posit a fixed parameter $\theta_0$ shared across repeated observations and quantify uncertainty about this common parameter using its likelihood. A Wald interval relies on the asymptotic normality of an estimator \citep{wald1943tests}, whereas a profile-likelihood interval is commonly obtained by inverting a likelihood-ratio statistic whose reference distribution is justified asymptotically \citep{murphy2000profile}. Consequently, except in special models admitting exact finite-sample inference, their nominal coverage is generally asymptotic and may be inaccurate in small samples or under weak identification, nonregularity, or model misspecification.

These methods are not directly applicable to our target. Because each historical response $Y_i$ is generated under its own latent parameter $\theta_i$, the historical observations do not constitute repeated measurements from a common parameter $\theta_{n+1}$. Furthermore, since $Y_{n+1}$ has not yet been observed, there is no unit-specific likelihood $p_\theta(Y_{n+1}\mid X_{n+1})$ from which a Wald or profile-likelihood interval for $\theta_{n+1}$ can be constructed. Although covariates may enter a conventional likelihood, they generally describe how the response distribution depends on observed predictors while the underlying parameter remains fixed. Modeling $\theta_{n+1}$ itself as a context-dependent random variable instead requires a distribution such as $\pi(\cdot\mid X_{n+1})$, returning to the hierarchical or empirical-Bayes formulations discussed above.
See Table~\ref{tab:likelihood-comparison} for a summary of their distinctions.

\begin{table*}[t]
    \centering
    \caption{Comparison of Wald intervals, profile-likelihood intervals, and LatentCP.}
    \label{tab:likelihood-comparison}
    \small
    \renewcommand{\arraystretch}{1.5}
    \setlength{\tabcolsep}{4pt}
    \begin{tabularx}{\textwidth}
    {
        >{\raggedright\arraybackslash}p{0.11\textwidth}
        >{\raggedright\arraybackslash}p{0.14\textwidth}
        >{\raggedright\arraybackslash}p{0.13\textwidth}
        >{\raggedright\arraybackslash}p{0.15\textwidth}
        >{\raggedright\arraybackslash}p{0.17\textwidth}
        >{\raggedright\arraybackslash}X
    }
        \toprule
        \textbf{Method}
        & \textbf{Target}
        & \textbf{Info. for inference}
        & \textbf{Context}
        & \textbf{Required knowledge}
        & \textbf{Output and guarantee} \\
        \midrule

        Wald interval
        & Fixed population parameter $\theta_0$
        & Observations governed by the same $\theta_0$
        & Covariates may enter the likelihood, but $\theta_0$ remains fixed
        & Parametric likelihood, standard error, and regularity conditions
        & Confidence interval with typically asymptotic coverage \\

        \midrule

        Profile-likelihood interval
        & Fixed parameter of interest $\psi_0$, possibly with nuisance parameters
        & Observations governed by the same underlying parameter
        & Covariates may enter the likelihood, but $\psi_0$ remains fixed
        & Full likelihood and regularity conditions for likelihood-ratio approximation
        & Likelihood-ratio confidence set with typically asymptotic coverage \\

        \midrule

        \textbf{LatentCP}
        & Random, unit-specific latent parameter $\theta_{n+1}$
        & $X_{n+1}$ only; $\theta_{n+1}$ and $Y_{n+1}$ are unobserved
        & The set $\mathcal{U}_\alpha(X_{n+1})$ adapts to the observed context
        & Forward family $\{P_\theta(\cdot\mid x):\theta\in\Theta\}$, but not $\pi(\cdot\mid x)$
        & Frequentist prediction set with finite-sample marginal coverage \\

        \bottomrule
    \end{tabularx}
\end{table*}

In contrast, \texttt{LatentCP} is designed specifically for the random, instance-specific target $\theta_{n+1}$. Its uncertainty set adapts to the observed context through both the conformal response set $\mathcal{C}_\gamma(X_{n+1};\mathcal{D}_n)$ and the context-dependent forward family $\{P_\theta(\cdot\mid X_{n+1}):\theta\in\Theta\}$. Under the exchangeability conditions in Assumption~\ref{ass:conditional-latent-laws}, it provides finite-sample marginal coverage for every calibration sample size without requiring asymptotic normality, a likelihood-ratio approximation, or knowledge of $\pi(\cdot\mid x)$. The guarantee remains marginal over the context distribution rather than conditional on each fixed value of $X_{n+1}$; nevertheless, the resulting set is context-adaptive in its construction. These distinctions make \texttt{LatentCP} more appropriate than standard Bayesian inference or fixed-parameter likelihood intervals for the inferential setting considered in this work.

\section{Technical Proofs}

\subsection{Proof of Lemma~\ref{lem:observed-exchangeability}}
\label{app:proof-lem-obs-exch}

\begin{proof}
Fix any permutation $\sigma$ of $\{1,\ldots,n+1\}$. Conditional independence implies that, after permuting the context-parameter pairs, the corresponding response vector has conditional distribution
\[
    \mathcal L\left(
        Y_{\sigma(1)},\ldots,Y_{\sigma(n+1)}
        ~\middle|~
        (X_{\sigma(i)},\theta_{\sigma(i)})_{i=1}^{n+1}
    \right)
    =
    \otimes_{i=1}^{n+1}
    P_{\theta_{\sigma(i)}}(\cdot\mid X_{\sigma(i)}).
\]
Thus, permuting the indices only reorders the same sample-specific response distributions; it does not change how the responses are jointly generated. Since the context-parameter pairs are exchangeable, the complete triples are therefore exchangeable. Marginalizing over the unobserved $\theta_i$'s establishes exchangeability of
$(X_i,Y_i)_{i=1}^{n+1}$.
\end{proof}

\subsection{Proof of Theorem~\ref{thm:lawcp-validity}}
\label{app:proof-main-theorem}

\begin{proof}
Let
$
    R_\gamma
    \coloneqq
    1 - p_\gamma(\theta_{n+1}, X_{n+1})
$
denote the probability mass that the true forward distribution assigns outside the conformal response set. By Assumption~\ref{ass:conditional-latent-laws} and
\eqref{eq:response-coverage},
\[
    \mathbb E[R_\gamma]
    =
    \mathbb P\left\{
        Y_{n+1}\notin
        \mathcal C_\gamma(X_{n+1};\mathcal{D}_n)
    \right\}
    \leq\gamma.
\]
By the definition of $\mathcal{U}_\alpha$,
$\theta_{n+1}\notin\mathcal{U}_\alpha(X_{n+1};\gamma)$ only if
$R_\gamma>\gamma/\alpha$. Markov's inequality therefore gives
\[
    \mathbb P\left\{
        \theta_{n+1}\notin
        \mathcal{U}_\alpha(X_{n+1};\gamma)
    \right\}
    \leq
    \frac{\mathbb E[R_\gamma]}{\gamma/\alpha}
    \leq\alpha.
\]
\end{proof}

\subsection{Proof of Theorem~\ref{thm:multilevel-validity}}
\label{app:proof-multilevel-validity}

\begin{proof}
By Tonelli's theorem and the fixed-level moment bound,
\[
\begin{aligned}
\mathbb E
\left[
e_\nu(\theta_{n+1},X_{n+1})
\right]
&=
\mathbb E
\left[
\int_\Gamma
e_\gamma(\theta_{n+1},X_{n+1})
\d\nu(\gamma)
\right] \\
&=
\int_\Gamma
\mathbb E
\left[
e_\gamma(\theta_{n+1},X_{n+1})
\right]
\d\nu(\gamma) \\
&\leq
\int_\Gamma 1\d\nu(\gamma)
=
1.
\end{aligned}
\]
Applying Markov's inequality gives
\[
\begin{aligned}
\mathbb P
\left\{
\theta_{n+1}
\notin
\mathcal{U}_\alpha(X_{n+1};\nu)
\right\}
&=
\mathbb P
\left\{
e_\nu(\theta_{n+1},X_{n+1})
>
\frac{1}{\alpha}
\right\} \\
&\leq
\alpha~
\mathbb E
\left[
e_\nu(\theta_{n+1},X_{n+1})
\right] \\
&\leq
\alpha.
\end{aligned}
\]
The stated coverage guarantee follows.
\end{proof}

\subsection{Proof of Proposition~\ref{prop:aggregation-sandwich}}
\label{app:proof-agg-sandwich}

\begin{proof}

We prove the two inclusions separately. First, let
\[
\theta
\in
\bigcap_{k=1}^K
\mathcal{U}_\alpha(x;\gamma_k).
\]
By the definition of the fixed-level uncertainty sets in \eqref{eq:lawcp-set},
\[
e_{\gamma_k}(\theta,x)
\leq
\frac{1}{\alpha}
\qquad
\text{for every }
k=1,\ldots,K.
\]
Because the weights are nonnegative and sum to one,
\[
e_\nu(\theta,x)
=
\sum_{k=1}^K
w_k e_{\gamma_k}(\theta,x)
\leq
\sum_{k=1}^K
w_k\frac{1}{\alpha}
=
\frac{1}{\alpha}
\sum_{k=1}^K w_k
=
\frac{1}{\alpha}.
\]
Therefore,
$
\theta\in \mathcal{U}_\alpha(x;\nu),
$
which proves
$
\bigcap_{k=1}^K
\mathcal{U}_\alpha(x;\gamma_k)
\subseteq
\mathcal{U}_\alpha(x;\nu).
$

For the second inclusion, we prove the contrapositive. Suppose that
\[
\theta
\notin
\bigcup_{k=1}^K
\mathcal{U}_\alpha(x;\gamma_k).
\]
Then $\theta$ is excluded by every fixed-level set, so
\[
e_{\gamma_k}(\theta,x)
>
\frac{1}{\alpha}
\qquad
\text{for every }
k=1,\ldots,K.
\]
It follows that
\[
e_\nu(\theta,x)-\frac{1}{\alpha}
=
\sum_{k=1}^K
w_k e_{\gamma_k}(\theta,x)
-
\frac{1}{\alpha}
\sum_{k=1}^K w_k
=
\sum_{k=1}^K
w_k
\left(
e_{\gamma_k}(\theta,x)
-
\frac{1}{\alpha}
\right)
>
0,
\]
where the strict inequality follows because every weight is positive
and every term inside the parentheses is strictly positive. Hence,
\[
e_\nu(\theta,x)
>
\frac{1}{\alpha},
\]
which implies
$
\theta\notin \mathcal{U}_\alpha(x;\nu).
$
We have therefore shown that
\[
\theta
\notin
\bigcup_{k=1}^K
\mathcal{U}_\alpha(x;\gamma_k)
\quad\Longrightarrow\quad
\theta
\notin
\mathcal{U}_\alpha(x;\nu).
\]
Taking the contrapositive gives
$
\mathcal{U}_\alpha(x;\nu)
\subseteq
\bigcup_{k=1}^K
\mathcal{U}_\alpha(x;\gamma_k).
$
Combining the two inclusions completes the proof.
\end{proof}

\section{Computational and Implementation Details}
\label{app:computational-details}

This appendix describes the practical implementation of \texttt{LatentCP}, which involves two main computational tasks. First, for a fixed response-space miscoverage level $\gamma\in(0,\alpha)$, we evaluate the forward compatibility $p_\gamma(\theta,x)$ over the latent parameter space, which in principle requires enumerating all candidate values of $\theta$. Second, using an independent tuning sample, we select either a single response level $\gamma$ or multiple levels $\gamma_1,\ldots,\gamma_K$ together with their aggregation weights $w_1,\ldots,w_K$.

\subsection{Distribution Configurations in Figure~\ref{fig:tradeoff}}
\label{app:tradeoff-config}

We consider four synthetic data-generating mechanisms without informative
context, equivalently taking $X$ to be constant. In each repetition, latent
parameters are independently sampled from the specified mixing distribution
and then used to generate the observable responses. The latent parameters are
used only for data generation and are not supplied to \texttt{LatentCP}.

For the first three settings, we use the absolute-residual nonconformity score
\[
    \widehat{s}(y)=|y-\widehat{m}|,
\]
where $\widehat{m}$ is a fitted constant response predictor held fixed during
calibration. Consequently,
\[
    \mathcal C_\gamma
    =
    [\widehat m-\widehat q_{1-\gamma},
      \widehat m+\widehat q_{1-\gamma}]
    \cap\mathcal Y,
\]
where $\widehat q_{1-\gamma}$ is the split-conformal empirical quantile.

\paragraph{Gaussian location model.}
The latent parameter follows a standard Gaussian distribution truncated to
$\Theta=[-4,4]$:
\[
    \theta\sim\mathcal N(0,1)\bigm|_{[-4,4]},
    \qquad
    Y\mid\theta\sim\mathcal N(\theta,0.5^2).
\]
For $\mathcal C_\gamma=[\ell_\gamma,u_\gamma]$, the forward compatibility is
available analytically as
\[
    p_\gamma(\theta)
    =
    \Phi\left(\frac{u_\gamma-\theta}{0.5}\right)
    -
    \Phi\left(\frac{\ell_\gamma-\theta}{0.5}\right),
\]
where $\Phi$ denotes the standard Gaussian distribution function. Set size is
measured by Lebesgue length over $[-4,4]$.

\paragraph{Poisson log-rate model.}
We generate
\[
    \theta\sim\mathcal N(0,1)\bigm|_{[-3.5,3.5]},
    \qquad
    Y\mid\theta\sim\operatorname{Poisson}\!\left(e^\theta\right),
\]
so that $\theta$ represents the logarithm of the Poisson rate. Because the
response is integer-valued, its compatibility is computed by summing the
Poisson probability mass over the conformal response set:
\[
    p_\gamma(\theta)
    =
    \sum_{y\in\mathcal C_\gamma\cap\mathbb N_0}
    \frac{\exp(-e^\theta)e^{\theta y}}{y!}.
\]
Set size is measured by Lebesgue length over $[-3.5,3.5]$.

\paragraph{Bernoulli model.}
The latent success probability and response are generated according to
\[
    \theta\sim\operatorname{Beta}(1,19),
    \qquad
    Y\mid\theta\sim\operatorname{Bernoulli}(\theta),
    \qquad
    \Theta=[0,1].
\]
For any conformal response set
$\mathcal C_\gamma\subseteq\{0,1\}$, the forward compatibility is
\[
    p_\gamma(\theta)
    =
    (1-\theta)\mathbbm 1\{0\in\mathcal C_\gamma\}
    +
    \theta\mathbbm 1\{1\in\mathcal C_\gamma\}.
\]
Set size is measured by Lebesgue length over $[0,1]$.

\paragraph{Categorical simplex model.}
Let
\[
    \Theta
    =
    \left\{
        (\theta_1,\theta_2)\in\mathbb R_+^2:
        \theta_1+\theta_2\leq 1
    \right\},
    \qquad
    \theta_0=1-\theta_1-\theta_2.
\]
We generate the latent probability vector and response as
\[
    (\theta_0,\theta_1,\theta_2)
    \sim\operatorname{Dirichlet}(55,4,1),
    \qquad
    Y\mid\boldsymbol{\theta}
    \sim\operatorname{Categorical}
    (\theta_0,\theta_1,\theta_2),
\]
and use the ordinal nonconformity score $\widehat s(y)=y$. The marginal response
probabilities are therefore
\[
    \mathbb P(Y=0)=\frac{55}{60},
    \qquad
    \mathbb P(Y=1)=\frac{4}{60},
    \qquad
    \mathbb P(Y=2)=\frac{1}{60}.
\]
For $\mathcal C_\gamma\subseteq\{0,1,2\}$, compatibility is evaluated exactly:
\[
    p_\gamma(\boldsymbol{\theta})
    =
    \sum_{j=0}^2
    \theta_j\mathbbm 1\{j\in\mathcal C_\gamma\}.
\]
Set size is measured by two-dimensional Lebesgue area over $\Theta$, whose
total area is $1/2$.

This final mechanism is constructed to illustrate a strict benefit from
multilevel aggregation. At $\alpha=0.2$, consider
\[
    \gamma_1^\star=\frac{1}{60},
    \qquad
    \gamma_2^\star=\frac{5}{60},
    \qquad
    \nu^\star
    =
    \frac{1}{2}\delta_{\gamma_1^\star}
    +
    \frac{1}{2}\delta_{\gamma_2^\star}.
\]
The corresponding population response sets are
$\mathcal C_{\gamma_1^\star}=\{0,1\}$ and
$\mathcal C_{\gamma_2^\star}=\{0\}$, which produce
\[
    \mathcal U_1
    =
    \left\{
        \boldsymbol{\theta}\in\Theta:
        \theta_2\leq\frac{1}{12}
    \right\},
    \qquad
    \mathcal U_2
    =
    \left\{
        \boldsymbol{\theta}\in\Theta:
        \theta_1+\theta_2\leq\frac{5}{12}
    \right\}.
\]
The equal-weight multilevel construction yields
\[
    \mathcal U_\alpha(\nu^\star)
    =
    \left\{
        \boldsymbol{\theta}\in\Theta:
        \theta_1+6\theta_2\leq\frac{5}{6}
    \right\}.
\]
Its area is
\[
    \mu\!\left(\mathcal U_\alpha(\nu^\star)\right)
    =
    \frac{25}{432}
    \approx 0.05787,
\]
whereas the smallest fixed-level set has area
\[
    \min\{\mu(\mathcal U_1),\mu(\mathcal U_2)\}
    =
    \frac{23}{288}
    \approx 0.07986.
\]
Thus, this setting provides an explicit example in which a nondegenerate
mixing distribution is strictly more efficient than every point-mass
construction.

For each mechanism, Figure~\ref{fig:tradeoff} averages set size over $60$
Monte Carlo repetitions, each using $3{,}000$ calibration observations. The
top panels consider $\alpha\in\{0.1,0.2,0.3\}$ and compare the point-mass
construction $\delta_\gamma$ with
\[
    \nu_{\gamma,\gamma'}
    =
    \frac{1}{2}\delta_\gamma+\frac{1}{2}\delta_{\gamma'},
\]
where the latter is minimized over $\gamma'$ for each displayed $\gamma$.
The bottom panels show the complete set-size surface over
$(\gamma,\gamma')$ at $\alpha=0.2$ and the corresponding minimizing path
$\gamma'^\star(\gamma)$.

\subsection{Parameter-Space Evaluation and Construction}
\label{app:forward-evaluation}
Fix a context $x$, calibration data $\mathcal D_n$, and a response-space miscoverage level $\gamma\in(0,\alpha)$. Recall that the forward compatibility is
\[
p_\gamma(\theta,x)=P_\theta\!\left\{Y\in\mathcal C_\gamma(x;\mathcal D_n)\mid x\right\}.
\]
Because the forward family $P_\theta(\cdot\mid x)$ is given, this probability can be evaluated directly by integrating or summing the forward law over $\mathcal C_\gamma(x;\mathcal D_n)$.

When $\Theta$ is finite, we enumerate every $\theta\in\Theta$ and apply the inclusion rule in \eqref{eq:lawcp-set}. When $\Theta$ is continuous, we evaluate the same rule over a finite parameter design $\Theta_N=\{\theta_1,\ldots,\theta_N\}$ and construct
\begin{equation}
\label{eq:grid-latent-set}
\widehat{\mathcal U}_{\alpha,N}(x;\gamma)
=
\left\{
\theta_j\in\Theta_N:
p_\gamma(\theta_j,x)\geq 1-\frac{\gamma}{\alpha}
\right\}.
\end{equation}
For a multilevel distribution $\nu=\sum_{k=1}^K w_k\delta_{\gamma_k}$, the same parameter design is used to compute
\[
e_\nu(\theta_j,x)
=
\sum_{k=1}^K
w_k\frac{1-p_{\gamma_k}(\theta_j,x)}{\gamma_k},
\]
yielding
\begin{equation}
\label{eq:grid-multilevel-set}
\widehat{\mathcal U}_{\alpha,N}(x;\nu)
=
\left\{
\theta_j\in\Theta_N:
e_\nu(\theta_j,x)\leq\frac{1}{\alpha}
\right\}.
\end{equation}
All forward-compatibility evaluations are independent across candidates and can therefore be parallelized.

We next describe an expansion of the finite-design set that preserves the coverage guarantee when $\Theta$ is continuous.
Let
\[
r_N
=
\sup_{\theta\in\Theta}
\min_{1\leq j\leq N}
\|\theta-\theta_j\|
\]
be the covering radius of $\Theta_N$. Suppose that, for each response level $\gamma_k$ used in the construction,
\[
|p_{\gamma_k}(\theta,x)-p_{\gamma_k}(\theta',x)|
\leq
L_{\gamma_k}(x)\|\theta-\theta'\|,
\qquad
\theta,\theta'\in\Theta .
\]
Then every $\theta\in\Theta$ lies in some ball $B(\theta_j,r_N)\cap\Theta$, and for all $\theta$ in this ball,
\[
p_{\gamma_k}(\theta,x)
\leq
u_{jk}(x)
\coloneqq
\min\{1,p_{\gamma_k}(\theta_j,x)+L_{\gamma_k}(x)r_N\}.
\]
For the multilevel construction, define the lower cell-wise incompatibility bound
\[
\underline e_{\nu,j}(x)
=
\sum_{k=1}^K
w_k\frac{1-u_{jk}(x)}{\gamma_k}.
\]
Since $\underline e_{\nu,j}(x)\leq e_\nu(\theta,x)$ for every $\theta\in B(\theta_j,r_N)\cap\Theta$, the conservative expansion is
\[
\widehat{\mathcal U}^{\mathrm{exp}}_{\alpha,N}(x;\nu)
=
\bigcup_{j:\,\underline e_{\nu,j}(x)\leq 1/\alpha}
\left\{
B(\theta_j,r_N)\cap\Theta
\right\}.
\]
For the point-mass case $\nu=\delta_\gamma$, this reduces to
\[
\widehat{\mathcal U}^{\mathrm{exp}}_{\alpha,N}(x;\gamma)
=
\bigcup_{j:\,u_j(x;\gamma)\geq 1-\gamma/\alpha}
\left\{
B(\theta_j,r_N)\cap\Theta
\right\},
\]
where $u_j(x;\gamma)=\min\{1,p_\gamma(\theta_j,x)+L_\gamma(x)r_N\}$.

\begin{proposition}[Coverage of the expanded construction]
\label{prop:expanded-construction-validity}
Suppose that $\Theta_N$ has covering radius $r_N$ and that the Lipschitz bounds above hold for all response levels used in the construction. Then, for any fixed multilevel distribution $\nu=\sum_{k=1}^K w_k\delta_{\gamma_k}$ satisfying the conditions of Theorem~\ref{thm:multilevel-validity},
\[
\mathcal U_\alpha(x;\nu)
\subseteq
\widehat{\mathcal U}^{\mathrm{exp}}_{\alpha,N}(x;\nu).
\]
Consequently,
\[
\mathbb P\left\{
\theta_{n+1}\in
\widehat{\mathcal U}^{\mathrm{exp}}_{\alpha,N}(X_{n+1};\nu)
\right\}
\geq
1-\alpha.
\]
In particular, taking $\nu=\delta_\gamma$ gives the same conclusion for the point-mass construction $\widehat{\mathcal U}^{\mathrm{exp}}_{\alpha,N}(x;\gamma)$.
\end{proposition}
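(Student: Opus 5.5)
The argument has two parts: a deterministic containment $\mathcal U_\alpha(x;\nu)\subseteq\widehat{\mathcal U}^{\mathrm{exp}}_{\alpha,N}(x;\nu)$, valid for every $x$ and every realization of $\mathcal D_n$, and then monotonicity of probability combined with Theorem~\ref{thm:multilevel-validity}.

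For the containment, fix $x$ and $\mathcal D_n$, and take any $\theta\in\mathcal U_\alpha(x;\nu)$, so that $e_\nu(\theta,x)\le 1/\alpha$.

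\begin{enumerate}
\item By the definition of the covering radius $r_N$, there is an index $j$ with $\|\theta-\theta_j\|\le r_N$. If the supremum/minimum defining $r_N$ is not attained, I will note that $\Theta_N$ is finite, so $\min_j$ is attained, and the supremum only bounds it from above. Hence $\theta\in B(\theta_j,r_N)\cap\Theta$, with a closed ball.
\item The Lipschitz bound gives $p_{\gamma_k}(\theta,x)\le p_{\gamma_k}(\theta_j,x)+L_{\gamma_k}(x)r_N$ for each $k$. Since $p_{\gamma_k}(\theta,x)$ is a probability, it is also at most $1$, so $p_{\gamma_k}(\theta,x)\le u_{jk}(x)$.
\item Consequently $1-p_{\gamma_k}(\theta,x)\ge 1-u_{jk}(x)$. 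Dividing by $\gamma_k>0$ and summing with the nonnegative weights $w_k$ yields
\[
\underline e_{\nu,j}(x)\le e_\nu(\theta,x)\le 1/\alpha .
\]
\item Therefore $j$ is one of the indices retained in the union defining $\widehat{\mathcal U}^{\mathrm{exp}}_{\alpha,N}(x;\nu)$, and $\theta$ lies in the corresponding ball, which proves the inclusion.
\end{enumerate}

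For coverage, apply the containment at $x=X_{n+1}$ on every realization. This gives the event inclusion
\[
\{\theta_{n+1}\in\mathcal U_\alpha(X_{n+1};\nu)\}\subseteq\{\theta_{n+1}\in\widehat{\mathcal U}^{\mathrm{exp}}_{\alpha,N}(X_{n+1};\nu)\}.
\]
Theorem~\ref{thm:multilevel-validity} applies because $\nu$ is fixed independently of the calibration sample and the new instance, so the first event has probability at least $1-\alpha$, and hence so does the second.

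The point-mass case follows by taking $\nu=\delta_\gamma$ ($K=1$, $w_1=1$). Then $\underline e_{\nu,j}\le 1/\alpha$ is equivalent to $u_j(x;\gamma)\ge 1-\gamma/\alpha$, which is exactly the stated reduction.

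The main technical care is measurability: the expanded set must make $\{\theta_{n+1}\in\cdot\}$ an event. It does, because it is a finite union of balls whose selection depends measurably on the data through $p_{\gamma_k}(\theta_j,X_{n+1})$. Even without that, monotonicity holds for inner probability, so this is not a genuine obstacle. The Lipschitz constants $L_{\gamma_k}(x)$ may depend on $\mathcal D_n$, since $\mathcal C_\gamma$ does; this is harmless because the containment is pointwise in the realization.
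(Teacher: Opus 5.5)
Your proposal is correct and follows the paper's proof essentially step for step: pointwise containment via the covering radius and the Lipschitz bound, then coverage from Theorem~\ref{thm:multilevel-validity} by monotonicity, with the point-mass case obtained by setting $\nu=\delta_\gamma$. Your explicit use of $p_{\gamma_k}(\theta,x)\le 1$ to get $p_{\gamma_k}(\theta,x)\le u_{jk}(x)$ is slightly more careful than the paper's displayed chain, which asserts $p_{\gamma_k}(\theta_j,x)+L_{\gamma_k}(x)r_N\le u_{jk}(x)$; that inequality fails whenever the left side exceeds one.
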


\begin{proof}
Fix $\theta\in\mathcal U_\alpha(x;\nu)$. By the definition of the covering radius, there exists $j$ such that $\theta\in B(\theta_j,r_N)\cap\Theta$. For every $k$, the Lipschitz bound implies
\[
p_{\gamma_k}(\theta,x)
\leq
p_{\gamma_k}(\theta_j,x)+L_{\gamma_k}(x)r_N
\leq
u_{jk}(x).
\]
Therefore
\[
\underline e_{\nu,j}(x)
=
\sum_{k=1}^K
w_k\frac{1-u_{jk}(x)}{\gamma_k}
\leq
\sum_{k=1}^K
w_k\frac{1-p_{\gamma_k}(\theta,x)}{\gamma_k}
=
e_\nu(\theta,x).
\]
Since $\theta\in\mathcal U_\alpha(x;\nu)$, we have $e_\nu(\theta,x)\leq 1/\alpha$. Hence $\underline e_{\nu,j}(x)\leq 1/\alpha$, so the ball $B(\theta_j,r_N)\cap\Theta$ is included in $\widehat{\mathcal U}^{\mathrm{exp}}_{\alpha,N}(x;\nu)$. Thus $\theta\in\widehat{\mathcal U}^{\mathrm{exp}}_{\alpha,N}(x;\nu)$, proving the set inclusion.

The coverage statement follows immediately from Theorem~\ref{thm:multilevel-validity} and the inclusion
\[
\mathcal U_\alpha(X_{n+1};\nu)
\subseteq
\widehat{\mathcal U}^{\mathrm{exp}}_{\alpha,N}(X_{n+1};\nu).
\]
The point-mass case follows by setting $\nu=\delta_\gamma$, for which Theorem~\ref{thm:lawcp-validity} applies.
\end{proof}

\subsection{Optimization of Response Levels}
\label{app:gamma-tuning}
We now describe how to optimize the response-space levels for both the point-mass construction and the multilevel construction with a fixed number of levels $K$. The optimization is performed on an independent tuning split
\[
\mathcal D_m^{\mathrm{tun}}
=
\{(X_i^{\mathrm{tun}},Y_i^{\mathrm{tun}})\}_{i=1}^m,
\]
which is independent of the final calibration data $\mathcal D_n$. For a search interval $\Gamma=[\gamma_{\min},\gamma_{\max}]\subset(0,\alpha)$, define the breakpoint set
\begin{equation}
\label{eq:tuning-breakpoints}
\Gamma_m
=
\{\gamma_{\min}\}
\cup
\left\{
\frac{j}{m+1}:
\gamma_{\min}<\frac{j}{m+1}\leq\gamma_{\max},
\ j=1,\ldots,m
\right\}.
\end{equation}

For each $\gamma\in\Gamma_m$, construct provisional sets using $\mathcal D_m^{\mathrm{tun}}$ and define
\begin{equation}
\label{eq:empirical-size-objective}
\widehat R_m(\gamma)
=
\frac{1}{m}
\sum_{i=1}^m
\widehat\mu_N\!\left\{
\widehat{\mathcal U}_{\alpha,N}
(X_i^{\mathrm{tun}};\gamma,\mathcal D_m^{\mathrm{tun}})
\right\},
\end{equation}
where $\widehat\mu_N$ is the numerical approximation of the chosen set-size measure. For example, for a weighted parameter design,
\[
\widehat\mu_N(A)
=
\sum_{j=1}^N
a_j\mathbf 1\{\theta_j\in A\},
\]
with $a_j$ equal to the cell volume or another prespecified quadrature weight. We select
\begin{equation}
\label{eq:tuned-gamma}
\widehat\gamma
\in
\arg\min_{\gamma\in\Gamma_m}
\widehat R_m(\gamma),
\end{equation}
using a fixed rule, such as the smallest $\gamma$, to break ties.

The finite search over $\Gamma_m$ is exact for the empirical point-mass objective. Between consecutive breakpoints, the conformal rank $\lceil(m+1)(1-\gamma)\rceil$ is constant, so the response set is fixed. Over the same interval, the inclusion threshold $1-\gamma/\alpha$ decreases with $\gamma$, and hence the latent set can only expand. Therefore the minimum over each interval is attained at its left endpoint.

\begin{algorithm}[t]
\caption{Point-mass tuning of the response level}
\label{alg:point-mass-gamma-tuning}
\begin{algorithmic}[1]
\Require Independent tuning data $\mathcal D_m^{\mathrm{tun}}$; search interval $\Gamma$; parameter design $\Theta_N$ with size weights $\{a_j\}_{j=1}^N$; forward evaluator.
\State Construct $\Gamma_m$ from~\eqref{eq:tuning-breakpoints}.
\ForAll{$\gamma\in\Gamma_m$}
    \State Construct the tuning response set $\mathcal C_\gamma(\cdot;\mathcal D_m^{\mathrm{tun}})$.
    \For{$i=1,\ldots,m$}
        \State Evaluate $p_\gamma(\theta_j,X_i^{\mathrm{tun}})$ for all $\theta_j\in\Theta_N$.
        \State Construct $\widehat{\mathcal U}_{\alpha,N}(X_i^{\mathrm{tun}};\gamma,\mathcal D_m^{\mathrm{tun}})$.
    \EndFor
    \State Compute $\widehat R_m(\gamma)$ from~\eqref{eq:empirical-size-objective}.
\EndFor
\State \Return $\widehat\gamma$ from~\eqref{eq:tuned-gamma}.
\end{algorithmic}
\end{algorithm}

For the multilevel construction, let
\[
\nu_{\gamma,w}
=
\sum_{k=1}^K w_k\delta_{\gamma_k},
\qquad
w\in\Delta_K,
\]
where $\Delta_K=\{w\in[0,1]^K:\sum_{k=1}^K w_k=1\}$. We optimize over support locations $\gamma_1,\ldots,\gamma_K\in\Gamma_m$ and over a finite weight grid $\mathcal W_K\subset\Delta_K$. For a candidate pair $(\gamma,w)$, define
\[
\widehat R_m(\nu_{\gamma,w})
=
\frac{1}{m}
\sum_{i=1}^m
\widehat\mu_N\!\left\{
\widehat{\mathcal U}_{\alpha,N}
(X_i^{\mathrm{tun}};\nu_{\gamma,w},\mathcal D_m^{\mathrm{tun}})
\right\}.
\]
Equivalently, after computing
\[
e_{\gamma_k}(\theta_j,X_i^{\mathrm{tun}})
=
\frac{1-p_{\gamma_k}(\theta_j,X_i^{\mathrm{tun}})}{\gamma_k},
\]
the multilevel inclusion rule is
\[
\sum_{k=1}^K
w_k e_{\gamma_k}(\theta_j,X_i^{\mathrm{tun}})
\leq
\frac{1}{\alpha}.
\]
We select
\begin{equation}
\label{eq:tuned-multilevel}
(\widehat\gamma_{1:K},\widehat w)
\in
\arg\min_{\gamma_{1:K}\in\Gamma_m^K,\ w\in\mathcal W_K}
\widehat R_m(\nu_{\gamma,w}),
\end{equation}
with a fixed tie-breaking rule. One may restrict to ordered tuples $\gamma_1\leq\cdots\leq\gamma_K$ to remove duplicate representations.

\begin{algorithm}[t]
\caption{Multilevel tuning of response levels and weights}
\label{alg:multilevel-gamma-weight-tuning}
\begin{algorithmic}[1]
\Require Independent tuning data $\mathcal D_m^{\mathrm{tun}}$; breakpoint set $\Gamma_m$; number of levels $K$; weight grid $\mathcal W_K$; parameter design $\Theta_N$ with size weights $\{a_j\}_{j=1}^N$; forward evaluator.
\State Precompute $e_{\gamma}(\theta_j,X_i^{\mathrm{tun}})$ for all $\gamma\in\Gamma_m$, $j=1,\ldots,N$, and $i=1,\ldots,m$.
\ForAll{ordered tuples $\gamma_{1:K}\in\Gamma_m^K$}
    \ForAll{$w\in\mathcal W_K$}
        \For{$i=1,\ldots,m$}
            \State Construct
            \[
            \widehat{\mathcal U}_{\alpha,N}
            (X_i^{\mathrm{tun}};\nu_{\gamma,w},\mathcal D_m^{\mathrm{tun}})
            =
            \left\{
            \theta_j\in\Theta_N:
            \sum_{k=1}^K
            w_k e_{\gamma_k}(\theta_j,X_i^{\mathrm{tun}})
            \leq
            \frac{1}{\alpha}
            \right\}.
            \]
        \EndFor
        \State Compute $\widehat R_m(\nu_{\gamma,w})$.
    \EndFor
\EndFor
\State \Return $(\widehat\gamma_{1:K},\widehat w)$ from~\eqref{eq:tuned-multilevel}.
\end{algorithmic}
\end{algorithm}

Let $G=|\Gamma_m|$ and let $C_p$ denote the cost of one forward-compatibility evaluation. The point-mass tuning cost is $O(mNGC_p)$, plus $O(mNG)$ operations to form the sets and compute their sizes. For multilevel tuning, the precomputation cost is again $O(mNGC_p)$. A direct search over ordered $K$-tuples and a weight grid $\mathcal W_K$ then costs
\[
O\!\left(mN K {\binom{G+K-1}{K}}|\mathcal W_K|\right)
\]
after precomputation. In practice, $K$ is kept small, and the dominant computations over $i$, $j$, and $\gamma$ are parallelizable. Because tuning uses only $\mathcal D_m^{\mathrm{tun}}$, numerical approximation in the tuning objective affects efficiency but not the finite-sample validity of the final set calibrated on $\mathcal D_n$.

For a fixed response level, sorting the $n$ calibration scores costs
$O(n\log n)$ time and $O(n)$ memory; a linear-time order-statistic routine may be used
when only one level is required. If $N$ parameter candidates are evaluated using $M$
forward simulations each, the leading cost is $O(NM)$ time and $O(N)$ memory when
the simulations are processed sequentially. The $N$ compatibility evaluations are
embarrassingly parallel.

Let $K=|\Gamma_m|$ be the number of tuning breakpoints. A direct implementation over
$m$ tuning contexts costs $O(KmNM)$ Monte Carlo operations. This can be reduced by
reusing the same $M$ forward draws for every candidate level. After sorting the $M$
simulated nonconformity scores for each context--parameter pair, all $K$ compatibility
values are empirical distribution-function queries, giving total time
\[
    O\!\left(
        mN\{M\log M+K\log M\}
    \right),
\]
in addition to the $O(m\log m)$ cost of sorting the tuning calibration scores. The
calculation may be streamed over contexts or parameter candidates when storing all
$mNM$ simulated scores is impractical.

\section{Experiment Details}
\label{app:experiment}

This appendix describes the baseline methods implementations, data-generating mechanisms, sample splitting, and real-data details used in the experiments. Unless stated otherwise, every feasible method observes only context--response pairs $(X,Y)$ during fitting and calibration and receives only the context $X$ at test time.

\subsection{Baseline Methods}

Let $p_\theta(y\mid x)$ denote the density or probability mass function of the
forward law $P_\theta(\cdot\mid x)$, and let
$\Theta_N=\{\vartheta_1,\ldots,\vartheta_N\}$ be the common latent-parameter
grid when discretization is required. All feasible baselines are fitted using
only observed context--response pairs and receive only $X_{n+1}$ at test time.

\noindent\textbf{EmpBayes.}
We posit a parametric mixing law $Q_\eta(\mathrm d\theta\mid x)$ and estimate
$\eta$ from the marginal response model
\[
    p_\eta(y\mid x)
    =\int_\Theta p_\theta(y\mid x)Q_\eta(\mathrm d\theta\mid x).
\]
The continuous experiments use a Gaussian or diagonal-Gaussian mixing family,
fitted by marginal moments or EM as appropriate. The reported set is the
smallest grid-weighted region carrying at least $1-\alpha$ mass under
$Q_{\widehat\eta}(\cdot\mid x)$ \citep{efron1973stein,ignatiadis2019covariate}.

\noindent\textbf{NPMLE.}
This likelihood baseline estimates an unrestricted discrete mixing law on
$\Theta_N$ by
\[
    \widehat w\in\arg\max_{w\in\Delta_N}
    \sum_i\log\!\left\{
        \sum_{j=1}^N w_j p_{\vartheta_j}(Y_i\mid X_i)
    \right\},
\]
using EM. It returns a minimum-mass $1-\alpha$ grid set; for an ordered scalar
grid, we use the shortest interval carrying that mass. Thus, \texttt{NPMLE} denotes
the nonparametric maximum-likelihood estimator of the latent mixing law
\citep{lindsay1995mixture,soloff2025multivariate}.

\noindent\textbf{Noisy CP.}
For each observed pair, we first construct the unitwise inverse proxy
\[
    \widetilde\theta_i
    \in\arg\max_{\vartheta\in\Theta_N}
    p_\vartheta(Y_i\mid X_i),
\]
using the prescribed tie-breaking rule when the inverse is non-unique. A
predictor $\widehat m(x)$ is fitted to the training proxies, and split conformal
prediction is then calibrated using
$d\{\widetilde\theta_i,\widehat m(X_i)\}$. Its test set is
\[
    \mathcal U^{\mathrm{Noisy}}_\alpha(x)
    =\left\{\vartheta\in\Theta_N:
      d\{\vartheta,\widehat m(x)\}\leq
      \widehat q_{1-\alpha}\right\}.
\]
This baseline treats noisy inverse estimates as if they were latent labels and
therefore need not cover the true $\theta_{n+1}$ when inversion is ambiguous
\citep{lei2018distribution,einbinder2024label}.

\noindent\textbf{Oracle CP.}
Oracle CP applies the same split-conformal construction using the true latent
labels $\theta_i$ in place of $\widetilde\theta_i$. For the two-dimensional
grid experiments, the corresponding oracle reference is the smallest region
$A$ satisfying $\pi(A\mid x)\geq 1-\alpha$ under the true conditional latent
law. Both versions use information unavailable to feasible methods and serve
only as efficiency benchmarks.

\subsection{Synthetic Data Generation}
The synthetic experiments cover four scalar and four two-dimensional latent parameter families. In every setting, $X$ is observed, $\theta$ is latent, and $Y$ is generated from the known forward law $P_\theta(\cdot\mid X)$. The distribution of $\theta\mid X$ is used only to generate data and is not supplied to LatentCP. Each synthetic dataset is divided into independent training, tuning, calibration, and test samples. Unless stated otherwise, each split contains $1{,}000$ observations, and results are evaluated on an independent test sample. We use the absolute residual score for scalar responses, equivalently the $\ell_2$ score
$\widehat s(x,y)=\lVert y-\widehat f(x)\rVert_2$. The tuning sample selects $\gamma$ or the multilevel support and weights, after which the response sets are calibrated on the independent calibration sample.

\paragraph{Scalar $\theta$ data-generating mechanisms.}
The four one-dimensional settings in Figure~\ref{fig:res-coverage-eff} are:
\begin{itemize}[leftmargin=*,itemsep=2pt,topsep=3pt]
    \item \textbf{Gaussian mixture.} Draw $X\sim\mathcal N(0,I_3)$ and set
    $m(X)=X_1-0.6X_2+0.4X_3$. Then
    $\theta\mid X\sim0.85\mathcal N\{m(X)-0.3,0.2^2\}
    +0.15\mathcal N\{m(X)+1.7,0.2^2\}$ and
    $Y\mid\theta,X\sim\mathcal N(\theta,0.2^2)$.

    \item \textbf{Sign nonidentifiability.} Let
    \[
    \begin{aligned}
        Z\mid X&\sim\mathcal N\{1.2+0.25\tanh(X_1),0.1^2\},\\
        M&=\max\{0.1,Z\},\quad \theta=SM,\quad \Pr(S=1)=0.9.
    \end{aligned}
    \]
    The forward law is $Y\mid\theta,X\sim\mathcal N(\theta^2,0.25^2)$, so
    $\theta$ and $-\theta$ are observationally equivalent.

    \item \textbf{Regime crossing.} With no informative context, the latent
    law belongs to a core, displaced, or bursty family with total masses
    $(0.95,0.02,0.03)$. The core laws are
    $P_a=\mathcal N(a,1)$ for $a\in\{-1,-0.75,\ldots,1\}$, weighted by a
    discrete Gaussian kernel with scale $0.5$. The displaced laws are
    $P_{s,d}=\mathcal N(sd,0.1^2)$ for $s\in\{-1,1\}$ and
    $d\in\{2.5,2.75,3,3.25,3.5\}$. Finally, the bursty laws are
    $P_b=(1-b)\mathcal N(0,0.1^2)+(b/2)\mathcal N(-30,0.1^2)
    +(b/2)\mathcal N(30,0.1^2)$ for
    $b\in\{0.10,0.125,0.15,0.175,0.20\}$. The displaced and bursty laws are
    uniform within their respective families. Their response distributions
    cross repeatedly as the response-set level changes, making this setting
    favorable to combining information across multiple $\gamma$ values.

    \item \textbf{Aliased spikes.} Draw $X\sim\mathcal N(0,1)$ and assign
    $\theta$ to left, central, and right regimes with probabilities
    $(0.002,0.993,0.005)$. The outer regimes are uniform on $[-1,-0.1]$ and
    $[0.1,1]$; the central regime is split equally between width-$0.005$
    intervals adjacent to $-0.1$ and $0.1$. Let
    $(I_0,I_1,I_2,I_3)=([0,1],[2,3],[3,4],[4,5])$. Conditional on the left
    regime, $B$ is uniform on $\{0,3\}$; in the central regime, its
    probabilities over the four bands are $(938,15,35,5)/993$; and in the
    right regime, $B=1$. We then draw $S$ uniformly from $\{-1,1\}$,
    $U_B\sim\operatorname{Unif}(I_B)$, and set $Y=X+S U_B$.
\end{itemize}

\paragraph{Two-dimensional $\theta$ data-generating mechanisms.}
For all four families, $X\sim\mathcal N(0,I_2)$. We set
$\beta=(0.8,-0.4)^\top$ and $e(X)=\exp(0.35X_1)$. Conditional on $X=x$, the
latent coordinates are independent. Coordinate $j$ follows a mixture of two
components; its second component is selected with probability
\[
    w_j(x)=\operatorname{logit}^{-1}
    \{\operatorname{logit}(0.15)+0.9x_{d_j}\},
    \qquad (d_1,d_2)=(2,1).
\]
The component pairs and forward laws are:
\begin{itemize}[leftmargin=*,itemsep=2pt,topsep=3pt]
    \item \textbf{Gaussian location--scale.} With
    $\theta=(\mu,\log\sigma)$,
    \[
    \begin{aligned}
      \mu &: \mathcal N(0,0.30^2)\,/\,\mathcal N(0,0.70^2),\\
      \log\sigma &: \mathcal N(0,0.18^2)\,/\,
                     \mathcal N\{\log(1.6),0.18^2\},\\
      Y&=\beta^\top X+\mu+\sigma e(X)Z,\quad Z\sim\mathcal N(0,1).
    \end{aligned}
    \]

    \item \textbf{Negative binomial.} With
    $\theta=(\log\lambda,\log\phi)$,
    \[
    \begin{aligned}
      \log\lambda &: \mathcal N(0,0.10^2)\,/\,
                     \mathcal N\{\log(1.5),0.12^2\},\\
      \log\phi &: \mathcal N\{\log(35),0.20^2\}\,/\,
                  \mathcal N\{\log(5),0.15^2\},\\
      Y&\sim\operatorname{NegBin}(m,\phi),\quad
      m=6\exp\{\log\lambda+\beta^\top X\},\quad
      \operatorname{Var}(Y\mid X,\theta)=m+m^2/\phi.
    \end{aligned}
    \]

    \item \textbf{Beta-binomial.} With $\theta=(p,\log\kappa)$, the latent baseline probability $p$ is generated from a two-component mixture on the logit scale:
    \[
    \operatorname{logit}(p)
    \sim
    \pi\,\mathcal N\{\operatorname{logit}(0.50),\tau_{0.50}^2\}
    +
    (1-\pi)\,\mathcal N\{\operatorname{logit}(0.65),\tau_{0.65}^2\},
    \]
    where $\tau_{0.50}$ and $\tau_{0.65}$ are chosen so that the induced standard deviations of $p$ are approximately $0.03$ in both components. The remaining latent component and forward law are
    \[
    \begin{aligned}
      \log\kappa &: \mathcal N\{\log(220),0.20^2\}\,/\,
                    \mathcal N\{\log(40),0.15^2\},\\
      \widetilde p&=\operatorname{logit}^{-1}
        \{\operatorname{logit}(p)+0.5\beta^\top X\},\\
      Y&\sim\operatorname{BetaBin}(40,\widetilde p,\kappa).
    \end{aligned}
    \]

    \item \textbf{Student $t$.} With $\theta=(\log\sigma,\log\nu)$, the
    component pairs and forward law are
    \[
    \begin{aligned}
      \log\sigma &: \mathcal N(0,0.15^2)\,/\,
                     \mathcal N\{\log(1.5),0.15^2\},\\
      \log\nu &: \mathcal N\{\log(35),0.20^2\}\,/\,
                  \mathcal N\{\log(6),0.12^2\},\\
      Y&=\beta^\top X+\sigma e(X)T_\nu,\quad T_\nu\sim t_\nu.
    \end{aligned}
    \]
\end{itemize}
The geometry panels use $x=(1.5,1.5)$ and a $40\times40$ parameter grid; all
mixture draws are truncated to the candidate rectangles used in the code.

\subsection{Real Data}

We use $1,689$ California wildfires recorded by Monitoring Trends in Burn Severity during 1984--2023 and 2024 Census county boundaries \citep{eidenshink2007project,finco2012monitoring}. We partition the study
period into ten nonoverlapping four-year windows. Figure~\ref{fig:setting}
uses California's 58 counties (580 county--window observations) for an interpretable geographic illustration. The formal analysis in Figure~\ref{fig:wildfire-overlay} uses 179 equal-area 50-km grid cells clipped to the state boundary (1,790 cell--window observations), retaining cells with at least 15\% of their original area.

For spatial unit $s$ and window $t$, let $Y_{s,t}$ denote the mapped-fire count and define the area--time exposure
\[
    \Delta_{s,t}
    =
    \frac{\operatorname{area}(s)}{1{,}000\ \mathrm{km}^2}
    \times 4\ \mathrm{years}.
\]
We use the forward model
\[
    Y_{s,t}\mid X_{s,t},\lambda_{s,t}
    \sim
    \operatorname{Poisson}(\Delta_{s,t}\lambda_{s,t}),
\]
so $\lambda_{s,t}$ is the latent annual intensity per
$1{,}000\ \mathrm{km}^2$. The context $X_{s,t}\in\mathbb R^{10}$ contains standardized east--west and north--south coordinates, standardized time, log exposure, retained area fraction, transformed preceding-window count, historical rate computed strictly from earlier windows, and three spatial--temporal interaction terms.

We fit a gradient-boosting model for the observed count on $1984$--$2003$, tune $\gamma$ on $2004$--$2007$, calibrate on $2008$--$2015$, and evaluate on $2016$--$2023$. With $\alpha=0.10$, we select $\gamma$ by minimizing mean latent-set size on the tuning window and then calibrate the response set on the calibration data. For each test context and exposure, the reported set is
\[
    \mathcal U_\alpha(X)
    =
    \left\{
        \lambda:
        \mathbb P_{Y\sim\operatorname{Poisson}(\Delta\lambda)}
        \left\{Y\notin\mathcal C_{\widehat\gamma}(X)\right\}
        \leq \frac{\widehat\gamma}{\alpha}
    \right\}.
\]

For the grid analysis, tuning selects $\widehat\gamma=0.0278$. Across the $358$ held-out cell--window observations, the response sets attain empirical coverage $0.992$, and tuning reduces mean latent-set size from $0.551$ under the fixed choice $\gamma=\alpha/2$ to $0.490$, an $11.1$\% reduction. The optimized two-level construction selects the same level at both support points and therefore coincides with the tuned single-level method in this application. The latent maps use a common nonlinear color scale only to reveal small positive intensities; it does not alter the reported endpoints or widths. Because $\lambda$ is unobserved and the observations exhibit spatial and temporal dependence, we report response coverage and latent-set size, but not empirical latent coverage.
\end{document}